\newtheorem{theorem}{Theorem}
\newtheorem{lemma}[theorem]{Lemma}
\newtheorem{definition}{Definition}
\renewcommand{\P}{\mathcal{P}}
\newcommand{\Q}{\mathcal{Q}}
\newcommand{\C}{\mathcal{C}}
\newcommand{\RR}{\mathbb{R}}
\newcommand{\NN}{\mathbb{N}}
\newcommand{\Od}{\mathrm{O}(d)}
\newcommand{\SO}{\mathrm{SO}(d)}
\newcommand{\Y}{\mathcal{Y}}
\newcommand{\Pibi}{\Pi_{\mathrm{bi}}}
\newcommand{\PiCP}{\Pi_{\mathrm{CP}}}
\newcommand{\GCP}{G_{\mathrm{CP}}}
\newcommand{\ECP}{E_{\mathrm{CP}}}
\newcommand{\FCP}{F_{\mathrm{CP}}}
\newcommand{\Gbi}{G_{\mathrm{bi}}}
\newcommand{\Ebi}{E_{\mathrm{bi}}}
\newcommand{\Fbi}{F_{\mathrm{bi}}}
\newcommand{\EbiTilde}{\tilde{E}_{\mathrm{bi}}}
\newcommand{\ECPbar}{\bar{E}_{\mathrm{CP}}}
\newcommand{\FbiTilde}{\tilde{F}_{\mathrm{bi}}}
\newcommand{\ECPtilde}{\tilde{E}_{\mathrm{CP}}}
\newcommand{\FCPtilde}{\tilde{F}_{\mathrm{CP}}}
\newcommand{\ub}{\mathrm{ub}}
\newcommand{\lb}{\mathrm{lb}}
\newcommand{\nq}{n_{\mathrm{qBnB}}}
\newcommand{\nb}{n_{\mathrm{BnB}}}
\newcommand{\Czero}{\mathbf{\mathcal{C}}^0}
\newcommand{\Id}{I_d}
\newcommand{\J}{\mathcal{J}}
\newcommand{\eg}{e.g.}
\newcommand{\ie}{i.e.}
\begin{document}

%%%%%%%%% TITLE
\title{Linearly Converging Quasi Branch and Bound Algorithms for Global Rigid Registration}

\author{Nadav Dym\\
Duke University\\
{\tt\small nadavd@math.duke.edu‬}
% For a paper whose authors are all at the same institution,
% omit the following lines up until the closing ``}''.
% Additional authors and addresses can be added with ``\and'',
% just like the second author.
% To save space, use either the email address or home page, not both
\and
Shahar Ziv Kovalsky\\
Duke University\\
{\tt\small shaharko@math.duke.edu}
}

\maketitle

\begin{abstract}
In recent years, several branch-and-bound (BnB) algorithms have been proposed to globally optimize  rigid registration problems. In this paper, we suggest a general framework to improve  upon the BnB approach, which we name \emph{Quasi BnB}. Quasi BnB replaces the linear lower bounds used in BnB algorithms with quadratic quasi-lower bounds which are based on the quadratic behavior of  the energy in the vicinity of the global minimum. While quasi-lower bounds are not truly lower bounds, the Quasi-BnB algorithm is globally optimal. In fact we prove that it exhibits linear convergence -- it achieves $\epsilon$-accuracy in $~O(\log(1/\epsilon)) $ time while the time complexity of other rigid registration BnB algorithms is polynomial in $1/\epsilon $. Our experiments verify that Quasi-BnB is significantly more efficient than state-of-the-art BnB algorithms, especially for problems where high accuracy is desired.
\end{abstract}

%%%%%%%%% BODY TEXT
\section{Introduction}\label{sec:intro}

\begin{figure}[t]
	\vspace{-10pt}
	\includegraphics[width=\columnwidth]{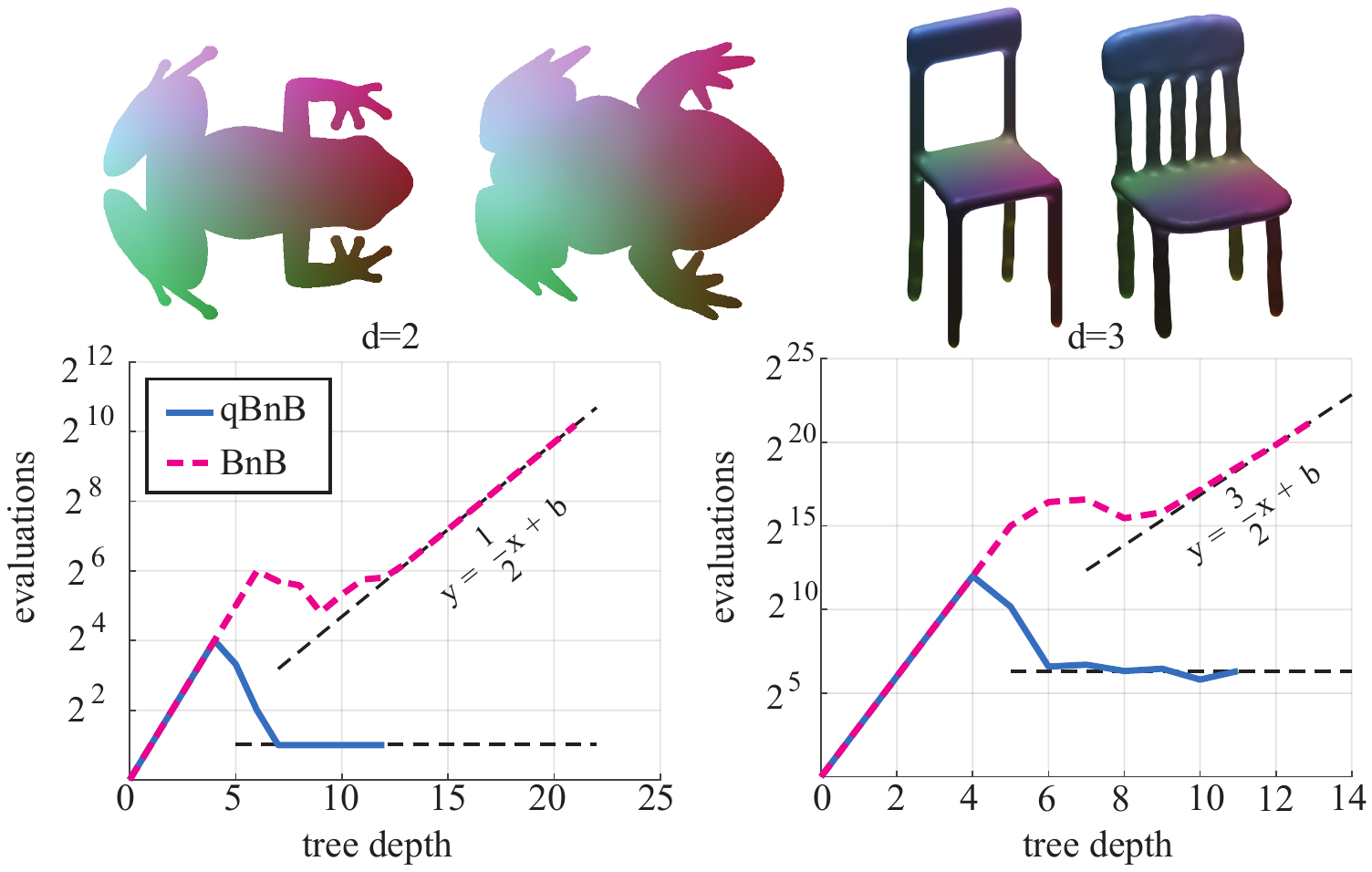}
		\caption{Comparing the proposed qBnB algorithm and the BnB algorithm of \cite{li20073d} for rigid bijective problems. 
			The colored shapes illustrate the alignment and correspondence maps obtained for a pair of 2D and 3D shapes. The graphs demonstrate the 
			preferable complexity of our qBnB approach, dashed lines illustrating the asymptotic behavior derived in our analysis (Theorem~\ref{thm:complexity}).} 
	\label{fig:hartley} \vspace{-10pt}
\end{figure}

Rigid registration is a fundamental problem in computer vision and related fields. The input to this problem are two similar shapes and the goal is to find the rigid motion that best aligns the two shapes, as well as a good mapping between the aligned shapes. There are several different formulations of the rigid registration problem. In this paper we focus on two popular formulations, the rigid closest point \emph{(rigid-CP)} \cite{besl1992method}  problem which is appropriate for partial matching problems (\eg where a partial point cloud obtained from a scanning device is mapped to a reconstructed model), and the   \emph{rigid-bijective} problem \cite{rangarajan1997softassign} which is applicable to full matching problems where it naturally defines a distance between shapes \cite{al2013continuous}.

Both of these rigid registration formulations optimize an appropriate energy $E(g,\pi) $ that depends on the chosen rigid motion $g$ and a correspondence $\pi$ between the two given point clouds. This energy is typically non-convex, however it is \emph{conditionally tractable} in that there exist efficient algorithms for globally optimizing for either of its variables when the other is fixed. This is put to good use by the well-known \emph{iterative closest point} (ICP) algorithm \cite{besl1992method} that optimizes $E$ using a very efficient alternating approach. It converges, however, to a local minimum and thus it strongly depends on a good  initialization.
%It convergence is however to a local minimum and so its success depends on the availability of a good initialization.

Conditional tractability can also be useful for \emph{global} optimization algorithms. For a fixed rigid motion $g$ it is possible to compute
$$F(g)=\min_\pi E(g,\pi)$$ 
and so to optimize $E$ globally it suffices to optimize $F$. Figure~\ref{fig:illust}(a)  shows an example function $F$ derived from a 2D rigid registration problem. The function $F$ is non-convex and non-differentiable,  
and the complexity of global optimization algorithms for optimizing $F$ is typically exponential in the dimension $D$ of the rigid motion space. However, in many cases of interest $D$ is a small constant and so these algorithms are in fact tractable. Thus, while high dimensional rigid registration problems can be computationally hard \cite{dym2017exact}, they are fixed parameter tractable~\cite{cygan2015parameterized}.

Branch and bound (BnB) algorithms \cite{bustos2016fast,yang2016goicp,pfeuffer2012discrete,li20073d}  are perhaps the only algorithms with a deterministic guarantee to globally optimize the rigid registration problem. The key ingredient in BnB algorithms is the ability to give a lower bound for the value of $F$ in a certain region, based on a single evaluation of $F$. For example, a lower bound for the grey interval in Figure~\ref{fig:illust}(c) would attempt to estimate the minimum of $F$ on the interval (denoted by the purple line (d)), based on the blue point in the center of the interval. 

\begin{figure}[t]
	\centering
	\includegraphics[width=0.8\columnwidth]{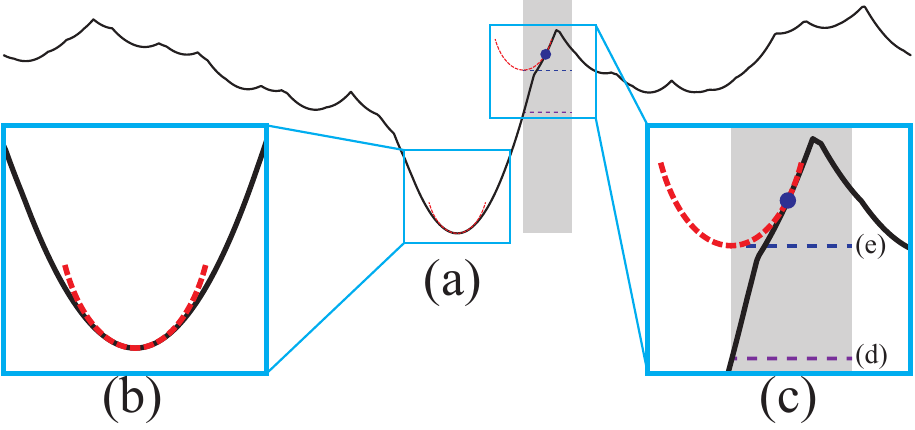}
	\caption{(a) Illustration of the function $\FCP$ defined in Subsection~\ref{sub:computing} for a 2D registration problem as a function of rotation angle. (b) Quadratic behavior of $\FCP$ at a minimum. (c)  The true minimum (d) on the grey interval and a quasi-lower bound (e) computed using the quadratic behavior near the global minimum.} 
	\label{fig:illust}
\end{figure}

 In this paper we introduce the term \emph{quasi-lower bound}, which is  a lower bound for the value of the function, over a given region, under the \emph{possibly-false} assumption that the region contains a global minimum. While quasi-lower bounds are not true bounds, the global optimality of qBnB algorithms is not compromised if lower bounds are replaced by quasi-lower bounds - leading to algorithms that we name quasi-BnB \emph{(qBnB)}. 

The advantage of quasi-lower bounds is that for conditionally smooth energies (as defined in Definition~\ref{def}) $F$ exhibits quadratic behavior near the global minimum, and this can be leveraged to define quadratic quasi-lower bounds, in contrast with lower bounds that are typically linear. This leads to much improved bounds, as illustrated in Figure~\ref{fig:illust}, where the quasi-lower bound (e) is able to provide a  bound that is higher than the true minimal value (d) in the grey interval, and hence tighter than any lower bound. 

The tightness of quasi-lower bounds yields substantial benefits in terms of complexity --we show that the complexity of achieving an $\epsilon$-optimal solution using BnB is at best $\sim \epsilon^{-D/2}$. In contrast, under some weak assumptions, the complexity of the qBnB algorithm  is $\sim \log(1/\epsilon)$, that is, it achieves a \emph{linear convergence rate} (in the sense of \cite{boyd2004convex}). Our experiments (\eg Figure~\ref{fig:hartley}) verify these theoretical results, and show that qBnB can be considerably more efficient than state-of-the-art BnB algorithms, especially in the presence of high noise or when high accuracy is required.

\section{Related Work}
We focus on global optimization methods for rigid registration. For other aspects of the rigid registration problem we refer the reader to surveys such as  \cite{tam2013registration}.

A variety of methods from the global optimization toolbox have been used to address the rigid registration problem, including
Monte Carlo \cite{irani1999combinatorial}, RANSAC based \cite{aiger20084}, particle filtering \cite{sandhu2010point}, particle swarming \cite{wachowiak2004approach}, convex relaxations \cite{maron2016point,khoo2016non}, and graduated non-convexity \cite{zhou2016fast}. Unlike BnB algorithms, none of these algorithms come with a deterministic guarantee for global convergence.

In \cite{pfeuffer2012discrete,mount1999efficient,breuel2003implementation} BnB algorithms for rigid matching in 2D are suggested. In \cite{gelfand2005robust} an algorithm for rigid matching in 3D is proposed, based on a combination of matching rotation-invariant descriptors and BnB. In \cite{li20073d} a BnB algorithm for a modified bijective matching objective in 3D is suggested, based on Lipschitz optimization. 

In \cite{yang2016goicp,yang2013goicp} the Go-ICP algorithm for rigid matching in 3D is suggested. Their algorithm searches the 6D transformation space very efficiently in the low noise regime, although it can be slow in the high noise regime. A branch and bound procedure for  the rigid matching  energy of \cite{breuel2003implementation} in 3D is proposed in \cite{bustos2016fast,parra2014fast}. For this energy they suggest various speedups that are not available for the classic maximum likelihood objective considered in ICP. 

The quasi-BnB approach we propose here can boost the convergence of any of the methods above  which consider  conditionally smooth objective, such as \cite{yang2016goicp,pfeuffer2012discrete,li20073d}, by simply replacing their first order lower bound with an appropriate second order quasi-lower bound. However it is not directly applicable for optimizing objectives which are not conditionally smooth such as the geometric matching objective of \cite{bustos2016fast,breuel2003implementation}.

 BnB algorithms with quadratic lower bounds have been proposed for smooth unconstrained optimization \cite{fowkes2013branch}, and   for camera pose estimation \cite{hartley2009global}. The latter approach obtains quadratic lower bounds by jointly optimizing over a linear approximation of a rotation, and the remaining variables; This approach does not straightforwardly apply to our scenario. In contrast, our qBnB approach only requires solving optimization problems with a fixed rotation.
 
\cite{pottmann2006geometry} show that ICP achieves linear converge rate, and suggest alternative second order methods with super-linear convergence rate. These algorithms only converge to a local minimum, whereas our algorithm is shown to exhibit linear convergence to the \emph{global minimum}.

\section{Method}
\subsection{Problem statement}\label{sub:problem}

%\sk{alternative suggestion draft: Rigid registration takes many different forms and variants. We first explain the common structure of problems for which our quasi-BnB framework is applicable, which we will name in this paper \emph{$D$-quasi optimizable}. Then, we focus on two instances of the rigid registration problem.}
We consider the application of the quasi BnB framework as described in Section~\ref{sec:intro}
 to the problem of rigid alignment of point clouds. This problem has several variants, and in this paper we will focus on two of them that we will refer to as the rigid-closest-point (\emph{rigid-CP}) problem and the \emph{rigid-bijective} problem. We also wish to focus on the common structure of problems for which our quasi BnB framework is applicable, which we will name in this paper \emph{$D$-quasi-optimizable}. We begin by defining $D$-quasi-optimizable problems.
 
 Consider optimization problems of the form 
        \begin{equation}\label{e:generalEnergy}
 \min_{x \in \RR^D, y \in \Y} E(x,y),
 \end{equation}
 where $\Y$ is a finite set. Denote the cube centered at $x$ with half-edge length $h$ by $\C_h(x)$. 
 \begin{definition}\label{def}
        We say that an optimization problem of the form \eqref{e:generalEnergy} is $D$-quasi-optimizable in a cube $\Czero=\C_{h_0}(x_0) $ if it satisfies the following conditions:
        \begin{enumerate}
                \item \emph{Existence of a minimizer:} There exists a minimizer $(x_*,y_*) $ of $E$ such that $x_*\in\Czero $. 
                \item \emph{$D$-tractability}: For fixed $x $, minimizing $E(x,\cdot)$ over $\Y$ can be performed in polynomial time.
                \item \emph{Conditional smoothness} For fixed $y \in \Y $, the function $E(\cdot,y)$ is smooth.
        \end{enumerate}  
        \end{definition}

We now turn to defining the rigid-CP and rigid-bijective problems and explaining why they are $D$-quasi-optimizable.  

 The input to the rigid-CP problem are two point clouds $\P=\{p_1,\ldots, p_n\}$ and $\Q=\{q_1,\ldots ,q_m \}$ in $\RR^d$ where we assume $d=3$ or $d=2$. We assume the point clouds are normalized to be in  $[-1,1]^d$ and have zero mean. Our goal is to find a rigid motion that aligns the points as-well-as-possible. Namely, denoting the group of (orientation preserving) rigid motions $\SO \times \RR^d $ by $\GCP$, and denoting by  $\PiCP$ the collection of all functions $\pi:\P \to \Q$, our goal is to solve the minimization problem
%\begin{equation}\label{eq:rigid}
%\hspace{-2em} \min_{\hspace{2em}(R,t) \in \GCP, \pi \in \PiCP} \hspace{-3em} %\ECP(R,t,\pi)= \frac{1}{n} \sum_i \|Rp_i+t-q_{\pi(i)}\|^2    
%\end{equation}
\begin{equation}\label{eq:rigid}
\hspace{-2em} \min_{\substack{(R,t) \in \GCP \\ \pi \in \PiCP}} \hspace{-1em} \ECP(R,t,\pi)= \frac{1}{n} \sum_{i=1}^n \|Rp_i+t-q_{\pi(i)}\|^2    
\end{equation}
Following \cite{yang2016goicp,li20073d} we simplify the domain of \eqref{eq:rigid}  by using the exponential map: We use $s$ to denote the intrinsic dimension of $\SO$. For a vector $r \in \RR^s$, we define $[r]$ to be the unique $d \times d $ skew-symmetric matrix ($[r]^T$=$-[r]$) whose entries under the diagonal are given by $r$. By applying the matrix exponential to $[r]$ we get a matrix 
$$R_r=\exp([r]) \in \SO .$$
Every rotation can be represented as $R_r$ for some $r$ in the closed ball $B_{\pi}(0)$ 
and so \eqref{eq:rigid} can be reduced to the problem 
\begin{equation} \label{e:CPequiv}
\min_{(r,t) \in \RR^s \times \RR^d, \pi \in \PiCP} \ECP(R_r,t,\pi).\end{equation}
Our next step is to identify a cube $\Czero$ in $\RR^s \times \RR^d$ in which $\ECP $ must have a global minimum. For the rotation component we can take the  cube $\C_\pi(0) $ that bounds the ball $B_{\pi}(0)$ .  For the translation component, we note that if  $(R_*,\pi_*,t_*)$ is a minimizer of $E$, then the optimal translation is the difference between the average of $R_* p_i $, which is zero by assumption, and the average of $q_{\pi(i)}$, which will be in the unit cube. Thus there exists a minimizer $(r_*,t_*,\pi_*) $ such that $(r_*,t_*) $ is in $\C_{\pi}(0) \times \C_1(0) $. This shows that \eqref{e:CPequiv} satisfies the first condition for being $D$-quasi-optimizable, where ${x=(r,t)}$, $y=\pi $ and $D=s+d$ (\ie, $D=6 $ or $3$ for 3D and 2D problems respectively). 

$D$-tractability follows from the fact that the optimal mappings $\pi$ minimizing $\ECP(R,t,\cdot)$ for fixed $R,t $ are  just the mappings that take each rigidly transformed point $Rp_i+t$ to its closest point in $\Q$, hence the term rigid-CP. Conditional smoothness is obvious. 

  The rigid bijective problem is similar to rigid-CP, but focuses on the case $n=m$, and only allows mappings between $\P$ and $\Q$ that are bijective (\ie, permutations).   We denote this set of mappings by $\Pibi$. In this scenario optimizing for $\pi$ while holding the rigid transformation component fixed is tractable, but more computationally intensive than rigid-CP since it requires solving a linear assignment problem. On the other hand, in this scenario the optimal translation is always $t_*=0$, since the mean of  the points $p_i $ and $q_{\pi(i)}$ is zero. Therefore we can reduce our problem to lower-dimensional optimization over $\Gbi=\SO $. As before, we use the exponential map to reparameterize the problem as
 \begin{equation}
 \hspace{-10pt}\min_{r \in \RR^s, \pi \in \Pibi} \hspace{-12pt}\Ebi(R_r,\pi)=\frac{1}{n} \sum_i \|R_{r}p_i-q_{\pi(i)}\|^2
 \end{equation}

% \begin{wraptable}[6]{r}{0.4\columnwidth}
% 	\begin{tabular}{|l|l|l|}
% 		\hline
% 		$D$       & 2D  &  3D\\
% 		\hline
% 		CP        &  3      &  6   \\
% 		\hline
% 		bijective &  1      &  3\\
% 		\hline
% 	\end{tabular}
% \vspace{-0.2cm}
% 	\caption{}
% 	 	\label{tab:D}
% \end{wraptable}
It follows from our discussion above that this optimization problem is $D$-quasi-optimizable over $\C_\pi(0) $ with $x=r,y=\pi$ and $D=s $ (\ie, $D=3 $ or $1$ for 3D and 2D problems respectively). 

\subsection{Optimizing quasi-optimizable functions}\label{sub:optimizing}
\paragraph{BnB algorithms.} $D$-tractability implies that we can reduce \eqref{e:generalEnergy} to the equivalent problem of minimizing the $D$-dimensional function
\begin{equation}\label{eq:F}
F(x)=\min_{y \in \Y} E(x,y). \end{equation} 
BnB algorithms for rigid registration and related problems are typically based on the ability to show that for $\delta>0 $ and $x_1,x_2 \in \Czero $ satisfying $\|x_1-x_2\|<\delta $,
\begin{equation}\label{eq:Delta}
F(x_1)-F(x_2)\leq \Delta(\delta)=L\delta+O(\delta^2) .
\end{equation}
Using a bound of this form
$F$  can be bounded from below \emph{in a cube} $\C_{h_i}(x_i) $ by evaluating $F(x_i)$ and noting that for any $x$ in the cube, \eqref{eq:Delta} implies that $F(x)$ is larger than 
\begin{equation}\label{eq:lb}
\lb_i\equiv F(x_i)-\Delta(\sqrt{D}h_i) . 
\end{equation}
In the appendix we show that any $F$ arising from a $D$-quasi-optimizable optimization problem is Lipschitz and thus a bound of the form \eqref{eq:Delta} always exists.

Based on the lower bound \eqref{eq:lb}, a simple breadth-first-search (BFS) BnB algorithm starts from a coarse partitioning of $\Czero$ into sub-cubes $\C_{h_i}(x_i)$, and then evaluates $F$ at each of the $x_i$. Each such evaluation gives a global upper bound $\ub_i=F(x_i) $ for the minimum $F^*$, and a local lower bound $\lb_i$ for the minimum \emph{on the cube} as defined in \eqref{eq:lb}. At each step the algorithm keeps track of the best global upper bound found so far, which we denote by $\ub$, and on the global lower bound $\lb$ that is defined as the smallest local lower bound found in this partition. Now for every $i$, If $\lb_i>\ub$ then $F$ is not minimized in $\C_{h_i}(x_i) $ and this cube can be excluded from the search. The BnB algorithm then refines the partition into smaller cubes for cubes that have not yet been eliminated, and this process is continued until $\ub-\lb<\epsilon $. This is the BnB strategy used in \cite{yang2016goicp,pfeuffer2012discrete,li20073d,mount1999efficient,breuel2003implementation} for rigid registration problems, though these papers vary in the search strategy they employ, the rigid energy the consider, the bound \eqref{eq:Delta} they compute, and other aspects.

\paragraph{Quasi-BnB.}
% The qBnB algorithm we suggest in this paper is based on another "smooth-like" property of $F$: like smooth functions, the rate of change of $F$ around a global minimizer is quadratic:
 The qBnB algorithm we suggest in this paper is based on the observation that the non-differentiable $F$ resembles smooth functions in that its behavior near minimizers is quadratic:
 \begin{lemma}
 \item If $x_*\in \Czero$ is a global minimizer of $F$, then there exists $C>0$ such that 
 \begin{equation}\label{eq:quadratic}
 F(x)-F(x_*) \leq C\|x-x_*\|^2, \forall x \in \Czero
 \end{equation} 
 \end{lemma}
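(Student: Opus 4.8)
The plan is to exploit that $F$ is the pointwise minimum of the finitely many smooth functions $E(\cdot,y)$, $y\in\Y$, and to reduce the quadratic estimate to the single smooth branch selected by the optimal label $y_*$. The crucial observation is that $x_*$ is not merely a global minimizer of $F$, but in fact a global minimizer of the smooth function $g(x):=E(x,y_*)$. Indeed, for every $x$ we have $F(x)=\min_y E(x,y)\leq E(x,y_*)=g(x)$, whereas $g(x_*)=E(x_*,y_*)=F(x_*)$; combining these with global optimality $F(x)\geq F(x_*)$ yields $g(x)\geq F(x)\geq F(x_*)=g(x_*)$ for all $x$, so $g$ attains its global minimum at $x_*$.

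With this reduction the rest is a standard Taylor argument. Since $g=E(\cdot,y_*)$ is smooth by conditional smoothness and $x_*$ minimizes $g$ over all of $\RR^D$, the first-order optimality condition forces $\nabla g(x_*)=0$ (so we never need $x_*$ to be interior to $\Czero$). A second-order Taylor expansion of $g$ about $x_*$ with Lagrange remainder then gives, for any $x\in\Czero$,
\[ g(x)-g(x_*)=\tfrac{1}{2}(x-x_*)^T\nabla^2 g(\xi)(x-x_*) \]
for some $\xi$ on the segment $[x_*,x]\subseteq\Czero$, the linear term dropping out because $\nabla g(x_*)=0$.

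To convert this into the claimed bound I would invoke compactness: as $\Czero$ is compact and $\nabla^2 g$ is continuous, $M:=\sup_{\xi\in\Czero}\|\nabla^2 g(\xi)\|$ is finite, and the quadratic form is bounded by $\tfrac{M}{2}\|x-x_*\|^2$. Chaining the inequalities,
\[ F(x)-F(x_*)\leq g(x)-g(x_*)\leq \tfrac{M}{2}\|x-x_*\|^2, \]
so the claim holds with $C=M/2$.

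I do not anticipate a serious obstacle; the single point requiring care is the reduction step, i.e.\ recognizing that global minimality of $F$ forces global minimality of one smooth branch $g$, so that its gradient vanishes at $x_*$ and the first-order term disappears. Note that finiteness of $\Y$ is not actually needed for this upper bound, since only the branch $y_*$ enters; it will instead matter for the complementary lower bound and the BnB complexity analysis. The remaining ingredients — the vanishing gradient and the compactness of $\Czero$ used to bound the Hessian — are routine.
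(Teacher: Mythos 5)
Your proposal is correct and follows essentially the same route as the paper's proof: bound $F(x)-F(x_*)$ by $E(x,y_*)-E(x_*,y_*)$ on the smooth branch selected by the optimal label, observe that the gradient of $E(\cdot,y_*)$ vanishes at $x_*$ because $(x_*,y_*)$ globally minimizes $E$, and conclude by a second-order Taylor estimate with the Hessian bounded on the compact cube. The only (immaterial) differences are that you spell out the reduction step explicitly and take the Hessian supremum over the single branch $E(\cdot,y_*)$, whereas the paper takes the maximum over all of $\Czero\times\Y$.
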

%\emph{Proof.} %
\begin{proof}
Assume $(x_*,y_*) $ is a global minimizer of $E$, Let $N_{x,y}$ be the operator norm of the Hessian of the function $E(\cdot,y) $ at a point $x$, and let $C$ be  the maximum of $N_{x,y}$ over $\Czero \times \Y $. Since the first order approximation of $E(\cdot,y_*) $ at $x_*$ vanishes, we obtain
\begin{align}\label{eq:quadBound}
F(x)-F(x_*)&\leq E(x,y_*)-E(x_*,y_*)\\
           &\leq C\|x-x_*\|^2  \qedhere \nonumber 
\end{align}
\end{proof} 
Let us  assume that we have an explicit quadratic bound of the general form \eqref{eq:quadBound}. That is, if $x_*$ is a global minimizer, and  $x,x_* \in \Czero $ satisfy $\|x-x_*\|<\delta $, then
\begin{equation}\label{eq:DeltaStar}
F(x)-F(x_*)\leq \Delta_*(\delta)=C\delta^2+O(\delta^3) .
\end{equation}
In Subsection~\ref{sub:computing} we give an explicit computation of $\Delta_*$ for rigid-bijective and rigid-CP problems. While $\Delta_*$ only induces a bound on $F$ near a global minimum, it can be used similarly to the Lipschitz bounds to eliminate cubes $\C_{h_i}(x_i)$, for if the global minimizer $x_*$ is contained in $\C_{h_i}(x_i) $ then $F(x_*)$ cannot be larger than
$$ \lb_i \equiv F(x_i)-\Delta_*(\sqrt{D} h_i). $$
Thus, $\C_{h_i}(x_i)$ can be eliminated whenever ${\lb_i>\ub}$. 
We emphasize that $\lb_i$  is not a lower bound for the value of $F$ in the cube since if the cube does not contain a global minimizer $F$ may vary linearly inside the cube -- for this reason we name it a \emph{quasi-lower bound}. Nonetheless from the point of view of BnB algorithms (see Figure~\ref{fig:illust}), linear lower bounds can be replaced by quadratic quasi-lower bounds, leading to a significantly more efficient qBnB algorithm, still guaranteed to converge to an $\epsilon$-optimal solution. A simple BFS qBnB algorithm for optimizing $D$-quasi-optimizable functions, for a given quasi-lower bound $\Delta_*(\delta) $, is provided in Algorithm~\ref{alg}. 

\begin{algorithm}[t]
\SetKwInOut{Input}{input}\SetKwInOut{Output}{output}
        \SetAlgoLined
        \Input{Required accuracy $\epsilon$} 
        \Output{$\epsilon$-optimal solution $x_*$}
        $\ub \leftarrow \infty$, $\lb \leftarrow -\infty$, $g \leftarrow 0$\;
    Put $\Czero=\C_{h_0}(x_0) $ into the list $L_g$ \;
    \While{$\ub-\lb>\epsilon$}{
        %\For{$i\leftarrow 1$ \KwTo $\mathrm{Length}(L_g)$}{
        \ForAll{$\C_{h_i}(x_i) \in L_g$}{
                        Compute $F(x_i) $\;
                        $\ub_i \leftarrow F(x_i) $, $\lb_i \leftarrow F(x_i)-\Delta_*(\sqrt{D}h_i) $\;
                        \If{$\ub_i<\ub$}{$\ub \leftarrow \ub_i$\;
                                          $x_* \leftarrow x_i $\;}
                }
         $\lb=\min_i \lb_i $ \;
                %\For{$i\leftarrow 1$ \KwTo $\mathrm{Length}(L_g)$}{
                \ForAll{$\C_{h_i}(x_i) \in L_g$}{
                        \If{$\lb_i\leq \ub$ }{subdivide $\C_{h_i}(x_i)$ into $2^D$ sub-cubes with half-edge length $h_i/2$ and insert into $L_{g+1}$}
                        %\If{$\lb_i\leq \ub$ }{subdivide $\C_{h_i}(x_i)$ into $2^D$ sub-cubes and insert into $L_{g+1}$}
                }
         $ g \leftarrow g+1 $\;
        }
\caption{BFS qBnB} 
        \label{alg}
\end{algorithm} 

\subsection{Complexity} Theorem~\ref{thm:complexity} (below, proof in appendix)  provides complexity bounds for the qBnB algorithm with a quadratic $\Delta_*(\delta)$, as in Algorithm~\ref{alg}, and for a similar BnB algorithm, where $\Delta_*$ is replaced by a Lipschitz bound $\Delta(\delta)$.
 We denote the number of $F$-evaluations in the algorithms by $\nq $ and $\nb$ respectively. We consider the limit where the prescribed accuracy $\epsilon$ tends to zero and all other parameters of the problem are held fixed:
\begin{restatable}{theorem}{complexity}\label{thm:complexity}
There exist positive constants $C_1,\ldots,C_4 $, such that 

 \begin{equation}C_1\epsilon^{-D/2} \leq  \nb \leq C_2 \epsilon^{-D} . \end{equation}
 \begin{equation} \nq \leq C_3 \epsilon^{-D/2}.  \end{equation}
 Furthermore, if $E $ has a finite number of minimizers $(x_{\ell}^*,y_{\ell}^*)_{\ell=1}^N $, 
 and the Hessian of $E(\cdot,y_{\ell}^*) $ is strictly positive definite for all $\ell$, then 
        \begin{equation} \label{eq:linearConvergence}
        \nq \leq C_4 \log_2(1/\epsilon) .
        \end{equation}

\end{restatable}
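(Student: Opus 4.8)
The plan is to use the two extra hypotheses — finitely many minimizers with strictly positive definite Hessians — to upgrade the one-sided quadratic estimate of the Lemma into a \emph{two-sided} one, and then to show that this forces the number of surviving cubes per generation to stay bounded by a constant while the bound gap $\ub-\lb$ contracts geometrically. First I would establish the two-sided estimate around each global minimizer $x_\ell^*$. The upper inequality $F(x)-F^*\le C\|x-x_\ell^*\|^2$ is already supplied by the Lemma. For the matching lower inequality, I would note that any correspondence $y$ active at $x_\ell^*$ (that is, with $E(x_\ell^*,y)=F^*$) makes $x_\ell^*$ a global minimizer of the smooth function $E(\cdot,y)$, since $E(\cdot,y)\ge F\ge F^*=E(x_\ell^*,y)$; hence its gradient vanishes and its Hessian is positive semidefinite there, and under the strict positive-definiteness hypothesis this strengthens to $E(x,y)-F^*\ge c\|x-x_\ell^*\|^2$ near $x_\ell^*$, while inactive correspondences stay bounded away from $F^*$ by continuity. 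Taking the minimum over $y\in\Y$ produces a radius $\rho>0$ and constants $0<c\le C$ with
\[ c\|x-x_\ell^*\|^2 \;\le\; F(x)-F^* \;\le\; C\|x-x_\ell^*\|^2, \qquad \|x-x_\ell^*\|\le\rho. \]
Finiteness of the minimizer set together with compactness of $\Czero$ then yields a constant $\mu>0$ with $F(x)-F^*\ge\mu$ outside the union of these $N$ balls.

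Next I would control the bracket $[\lb,\ub]$ at generation $g$, where every cube has half-edge $h_g=h_0/2^{g}$. The cube containing a fixed minimizer $x_\ell^*$ has center $x_i$ within $\sqrt{D}\,h_g$ of it, so by the defining property \eqref{eq:DeltaStar} of the quasi-lower bound, $F(x_i)-F^*\le\Delta_*(\sqrt{D}h_g)$; consequently its quasi-lower bound $\lb_i=F(x_i)-\Delta_*(\sqrt{D}h_g)$ does not exceed $F^*\le\ub$, so this cube is \emph{never} eliminated — exactly the global-optimality guarantee of qBnB — and it simultaneously certifies $\ub\le F^*+\Delta_*(\sqrt{D}h_g)$. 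Since $F\ge F^*$ everywhere, each $\lb_i\ge F^*-\Delta_*(\sqrt{D}h_g)$, whence $\lb\ge F^*-\Delta_*(\sqrt{D}h_g)$ and therefore $\ub-\lb\le 2\Delta_*(\sqrt{D}h_g)=2CDh_g^2+O(h_g^3)\le K'h_0^2\,4^{-g}$ for a constant $K'$. This gap falls below $\epsilon$ once $g>\tfrac12\log_2(1/\epsilon)+O(1)$, so the algorithm terminates after $g_*=O(\log(1/\epsilon))$ generations.

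It remains to bound the number of $F$-evaluations per generation, i.e.\ the number of surviving cubes. A cube with center $x_i$ survives only if $\lb_i\le\ub$, which combined with the previous paragraph gives $F(x_i)\le\ub+\Delta_*(\sqrt{D}h_g)\le F^*+Kh_g^2$ for a constant $K$. For $g$ beyond a fixed threshold $g_0$ determined by $Kh_{g_0}^2<\mu$ (which depends only on problem constants, \emph{not} on $\epsilon$), this is incompatible with $F(x_i)-F^*\ge\mu$, so every surviving center lies inside the balls; there the lower inequality forces $c\|x_i-x_\ell^*\|^2\le Kh_g^2$, i.e.\ $\|x_i-x_\ell^*\|\le Rh_g$ with $R=\sqrt{K/c}$. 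Because the centers of all generation-$g$ cubes lie on a common grid of spacing $2h_g$, the number of them within $Rh_g$ of a single $x_\ell^*$ is at most $(R+1)^D$; summing over the $N$ minimizers bounds the surviving cubes by the constant $M=N(R+1)^D$ for all $g\ge g_0$. The generations below $g_0$ contribute only a constant (at most $\sum_{g<g_0}2^{Dg}$). Adding these over the $O(\log(1/\epsilon))$ active generations yields $\nq\le M\cdot O(\log(1/\epsilon))+O(1)\le C_4\log_2(1/\epsilon)$.

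I expect the main obstacle to be the two-sided estimate rather than the counting. One must verify that \emph{every} correspondence active at a global minimizer contributes genuinely quadratic (not merely nonnegative) growth, so that $F$ — a pointwise minimum of finitely many smooth pieces — inherits a uniform quadratic lower bound; dealing with possible ties among active correspondences, and extracting constants $c,\rho,\mu$ that are uniform over all $N$ minimizers, is where the argument needs the most care and where both the strict-definiteness and finiteness hypotheses are essential (without them the surviving region would not shrink to an $O(h_g)$-neighborhood, degrading the bound to the $\epsilon^{-D/2}$ regime). Once the two-sided bound is in hand, the geometric contraction of $\ub-\lb$ and the constant-per-generation grid count are routine.
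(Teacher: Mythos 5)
Your proposal proves only the third claim of the theorem, the linear-convergence bound \eqref{eq:linearConvergence}. For that claim your argument is essentially correct and follows the same route as the paper: a two-sided quadratic estimate near each of the finitely many minimizers (your treatment of active versus inactive correspondences is the paper's index set $\J(\ell)$, spelled out with a bit more care), geometric contraction of $\ub-\lb$ giving termination after $O(\log(1/\epsilon))$ generations, and a grid-counting argument showing that beyond a fixed generation $g_0$ only a constant number of cubes survive per generation. Two cosmetic points: the cubes evaluated at generation $g+1$ are the $2^D$ children of the generation-$g$ survivors, so the per-generation evaluation count is $2^D M$ rather than $M$ (harmless, absorbed into $C_4$); and your constant $K$ in the survival criterion $F(x_i)\le F^*+Kh_g^2$ should come from combining $\lb_i\le\ub$ with $\ub\le F^*+\Delta_*(\sqrt{D}h_g)$, which you do state correctly.

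The genuine gap is that the theorem makes three assertions and you establish only one. The unconditional bounds $C_1\epsilon^{-D/2}\le\nb\le C_2\epsilon^{-D}$ and $\nq\le C_3\epsilon^{-D/2}$ are never proven, and neither follows from your argument, since your counting relies precisely on the positive-definiteness hypothesis you invoke. The two upper bounds need a separate (easier) argument with no Hessian assumption: with $\Delta_\alpha(\delta)=M\delta^\alpha$ ($\alpha=1$ for BnB, $\alpha=2$ for qBnB), the algorithm must terminate by generation $g_f\approx\frac{1}{\alpha}\log_2\bigl(M(\sqrt{D}h_0)^\alpha/\epsilon\bigr)$, because at that resolution every cube satisfies $\ub-\lb_j\le\ub_j-\lb_j=\Delta_\alpha(\sqrt{D}h_j)\le\epsilon$; in the worst case all $\sum_{g\le g_f}2^{Dg}=O(2^{Dg_f})$ cubes are evaluated, yielding $\epsilon^{-D}$ and $\epsilon^{-D/2}$ respectively. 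The lower bound on $\nb$ requires yet another idea, which is in fact the theorem's main point of contrast between BnB and qBnB: by the quadratic bound \eqref{eq:quadratic}, every $x$ within $\eta=\sqrt{\epsilon/C}$ of a global minimizer has $F(x)-F^*\le\epsilon$, so for all generations between $g_0\approx\frac12\log_2(1/\epsilon)$ (when a cube first fits inside $B_\eta(x_*)$) and $g_F\approx\log_2(1/\epsilon)$ (when the \emph{linear} bound first resolves $2\epsilon$), every sub-cube of that cube has $\lb_j=F(x_j)-M\sqrt{D}h_j<F^*-\epsilon\le\ub$ and hence cannot be eliminated; combined with the monotonicity of lower bounds under refinement (so that all its ancestors also survive), this forces at least $2^{D(g_F-1-g_0)}\sim\epsilon^{-D/2}$ evaluations. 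Your closing parenthetical gestures at this $\epsilon^{-D/2}$ regime, but no proof of the first two displayed inequalities appears, so as written the proposal establishes only \eqref{eq:linearConvergence}.
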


\subsection{Computing quasi-lower bounds}\label{sub:computing}
To conclude our discussion we now need to state explicit quadratic quasi-lower bounds for the rigid-CP and rigid-bijective problems. We denote by $\FCP$ and $\Fbi$ the functions obtained by applying \eqref{eq:F} to $\ECP$ and $\Ebi$.  

\paragraph{Rigid bijective quasi-lower bounds.} For $k \in \NN$, let $\psi_k$ denote the truncated Taylor expansion of $e^x$,
\begin{equation}
\psi_k(x)=e^x-\sum_{j=0}^{k-1} \frac{x^j}{j!} 
\end{equation}
%\vspace{-10pt}
 \begin{restatable}{theorem}{qlbBij}\label{thm:quasiBi}
        Let $\delta>0, r \in \RR^D $ and $r_*$ be a global minimizer of $\Fbi$, and assume $\|r-r_*\| \leq \delta $. Let $\sigma_{\P},\sigma_{\Q} $ denote the Frobenius norm of the matrices whose columns are the points in $\P$ and $\Q$ respectively. Then $\Delta_*(\delta)$ is given by                 
        \begin{equation}\label{eq:bijQlb}
        \hspace{-7pt}\Fbi(r)-\Fbi(r_*)\leq \Delta_*(\delta)\equiv \frac{2}{n} \sigma_{\P} \sigma_{Q} \, \psi_2(\delta)
        \end{equation}
        \end{restatable}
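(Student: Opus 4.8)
The plan is to specialize the argument of the preceding Lemma to $\Ebi$ and to exploit the first-order optimality of $r_*$ in order to cancel the linear term of a matrix-exponential expansion, leaving only the quadratic-and-higher tail that $\psi_2$ encodes. Let $\pi_*\in\Pibi$ minimize $\Ebi(R_{r_*},\cdot)$, so $\Fbi(r_*)=\Ebi(R_{r_*},\pi_*)$. First I would invoke the Lemma's inequality $\Fbi(r)-\Fbi(r_*)\le \Ebi(R_r,\pi_*)-\Ebi(R_{r_*},\pi_*)$, which is valid because $\Fbi(r)\le\Ebi(R_r,\pi_*)$ for every $r$. Expanding the squared norms, the $\|p_i\|^2$ and $\|q_{\pi_*(i)}\|^2$ terms cancel in the difference (since $\|R_rp_i\|=\|p_i\|$), and the right-hand side equals $2\,\mathrm{tr}\big((R_{r_*}-R_r)H\big)$, where $H=\tfrac1n\sum_i p_i q_{\pi_*(i)}^T=\tfrac1n PQ^T$ and $P,Q$ are the matrices whose columns are $p_i$ and $q_{\pi_*(i)}$.

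The structural observation I would use next is that $r_*$ is in fact a global minimizer of the smooth map $r\mapsto\Ebi(R_r,\pi_*)$: this map dominates $\Fbi$ everywhere and coincides with it at $r_*$. Equivalently $R_{r_*}$ maximizes $\mathrm{tr}(RH)$ over $R\in\SO$, so the first-order optimality condition on the group forces $S:=R_{r_*}H$ to be symmetric (the orthogonal-Procrustes condition). Substituting $R_r=WR_{r_*}$ with $W:=R_rR_{r_*}^T=\exp(\Xi)\in\SO$ for skew-symmetric $\Xi=\log W$, the bound becomes $2\,\mathrm{tr}\big((I-W)S\big)$. Since $I-W=-\sum_{j\ge1}\Xi^j/j!$ and $\mathrm{tr}(\Xi S)=0$ whenever $S$ is symmetric and $\Xi$ is skew, the linear term vanishes and I am left with $-2\sum_{j\ge2}\tfrac1{j!}\mathrm{tr}(\Xi^j S)$.

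Bounding the tail is then elementary. Writing $\mathrm{tr}(\Xi^j S)=\tfrac1n\langle Q,\,\Xi^j R_{r_*}P\rangle_F$ and applying Cauchy--Schwarz together with $\|\Xi^j R_{r_*}P\|_F\le\|\Xi\|_{\mathrm{op}}^j\|R_{r_*}P\|_F=\|\Xi\|_{\mathrm{op}}^j\sigma_{\P}$ and $\|Q\|_F=\sigma_{\Q}$ gives $|\mathrm{tr}(\Xi^j S)|\le\tfrac1n\sigma_{\P}\sigma_{\Q}\|\Xi\|_{\mathrm{op}}^j$ for every $j$. Summing over $j\ge2$ yields $\Fbi(r)-\Fbi(r_*)\le\tfrac2n\sigma_{\P}\sigma_{\Q}\sum_{j\ge2}\|\Xi\|_{\mathrm{op}}^j/j!=\tfrac2n\sigma_{\P}\sigma_{\Q}\,\psi_2(\|\Xi\|_{\mathrm{op}})$, which collapses to the claimed bound once $\|\Xi\|_{\mathrm{op}}\le\delta$, using that $\psi_2$ is increasing on $[0,\infty)$. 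Note that this route folds the size of $S$ and the exponential tail into a single estimate, so it needs $S$ only to be symmetric, not positive definite.

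I expect the final estimate $\|\Xi\|_{\mathrm{op}}=\|\log(R_rR_{r_*}^T)\|_{\mathrm{op}}\le\|r-r_*\|\le\delta$ to be the main obstacle, since every other step is a routine trace manipulation. It asserts that passing from the exponential coordinates to the relative rotation does not increase lengths. In 2D ($s=1$) rotations commute, $R_rR_{r_*}^T=R_{r-r_*}$, and it is an equality. In 3D it is the statement that the geodesic (rotation-angle) distance between $R_r$ and $R_{r_*}$ is at most the Euclidean distance $\|r-r_*\|$ of their exponential coordinates; I would obtain it from the non-expansiveness of the exponential map of the compact group $\SO$ endowed with its bi-invariant metric (non-negative sectional curvature), noting that the segment from $[r_*]$ to $[r]$ stays in the convex ball $B_\pi(0)$ on which $\exp$ is a contracting diffeomorphism. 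This is the only place where genuine Lie-theoretic input enters.
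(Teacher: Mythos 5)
Your proof is correct and follows essentially the same route as the paper's: bound $\Fbi(r)-\Fbi(r_*)$ by the smooth energy at a fixed optimal $\pi_*$, use first-order optimality of $r_*$ to cancel the linear term of the matrix-exponential expansion, bound the quadratic-and-higher tail by Cauchy--Schwarz to obtain $\tfrac{2}{n}\sigma_{\P}\sigma_{\Q}\psi_2$, and close with the Hartley--Kahl-type inequality relating the relative rotation $R_rR_{r_*}^T$ to $\|r-r_*\|$. The differences are organizational rather than substantive: the paper proves the case $r_*=0$ and reduces the general case via the change of variables $\tilde p_i=R_{r_*}p_i$ (your substitution $W=R_rR_{r_*}^T$ is the same device), your explicit Procrustes condition ``$R_{r_*}H$ symmetric, hence $\mathrm{tr}(\Xi S)=0$'' is the matrix-form restatement of the paper's ``the first-order approximation vanishes,'' and where the paper cites Lemma~3.2 of Hartley--Kahl for the final Lie-group fact you sketch the standard non-expansiveness-of-$\exp$ argument that underlies it.
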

 \begin{proof}
Let us consider the case where   $r_*=0$, so $(\Id,\pi_*) $  minimizes $\Ebi$ for some appropriate $\pi_*$. Note that for all $R,\pi $, 
\begin{equation}
%\Ebi(R,\pi)=\frac{1}{n}\left[\sigma_{\P}^2+\sigma_{\Q}^2-2\sum_i \langle Rp_i,q_{\pi(i)} \rangle \right] 
\hspace{-10pt}\Ebi(R,\pi)=\frac{\sigma_{\P}^2+\sigma_{\Q}^2-2\sum_i \langle Rp_i,q_{\pi(i)} \rangle}{n}
\end{equation}
it follows that
\begin{align*}
\Fbi(r)&-\Fbi(0) \leq \Ebi(R_r,\pi_*)-\Ebi(\Id,\pi_*)\\
          &\stackrel{\text(*)}{=}-\frac{2}{n} \sum_{i=1}^n \sum_{k=2}^{\infty} \langle \frac{1}{k!} [r]^k p_i,q_{\pi(i)} \rangle \\
          & \leq \frac{2}{n} \sum_{i=1}^n \sum_{k=2}^\infty \frac{1}{k!}\|[r]\|_{\mathrm{op}}^k\|p_i\|\|q_{\pi_*(i)}\| \\
          &\leq \frac{2}{n}\psi_2(\|[r]\|_{\mathrm{op}})\sigma_{\P}\sigma_{\Q}.    
\end{align*}
To obtain the equality $(*)$ note that $r_*=0$ minimizes $\Ebi(\cdot,\pi_*) $ and so the first order approximation of this function vanishes. In the appendix we show that $\|[r]\|_{\mathrm{op}} \leq \|r\| $ which concludes the proof for the case $r_*=0$, and then use this case to prove the theorem for general $r_*$.
 \end{proof}
Based on this quasi-lower bound, our algorithm for optimizing the rigid-bijective problem is just Algorithm~\ref{alg}, where as stated in Subsection~\ref{sub:problem} we take $x=r,y=\pi, \C_{h_0}(x_0)=\C_{\pi}(0) $, and we use   $\Delta_*$  defined in \eqref{eq:bijQlb}.
 
\paragraph{Rigid CP quasi-lower bounds.} For the rigid-CP problem we obtain the following quasi-lower bound that we prove in the appendix. 
\begin{restatable}{theorem}{qlbCP}\label{thm:quasiICP}
        Let $(r_*,t_*) $ be a minimizer of $\FCP$, and let $(r,t) \in \RR^s \times \RR^d $, and $\delta_1,\delta_2>0$ which satisfy
        $\|r-r_*\| \leq \delta_1 $and$\|t-t_*\| \leq \delta_2$. 
        Let $f_*$  be some upper bound for the global minimum of $\FCP$.
        Then
        \begin{equation}
        \FCP(r,t)-\FCP(r_*,t_*)\leq \Delta_*(\delta_1,\delta_2)
        \end{equation}
        where
\begin{align}\label{eq:quasiLBicp}
 \Delta_*(\delta_1,\delta_2) &=\frac{1}{n}\bigg[2\psi_2(\delta_1) (\sigma_{\P}^2+\sigma_{\P}\sqrt{n f_*}) \nonumber \\
&+2\delta_2\psi_1(\delta_1)\sum_i \|p_i\|+n\delta_2^2   \bigg] 
\end{align}
\end{restatable}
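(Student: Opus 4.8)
The plan is to follow the template of the bijective proof (Theorem~\ref{thm:quasiBi}): normalize the minimizer, bound $\FCP$ from above by freezing the optimal correspondence, expand the resulting difference of squared norms, and discard the linear terms using the two first-order optimality conditions. Following the reduction used there (which I would defer to the appendix), we may assume $r_*=0$, so that $R_{r_*}=\Id$ and $\|r\|\le\delta_1$; this amounts to replacing each $p_i$ by $R_{r_*}p_i$ and leaves $\sigma_{\P}$ and $\sum_i\|p_i\|$ unchanged. Let $\pi_*$ be a correspondence attaining $\FCP(r_*,t_*)=\ECP(\Id,t_*,\pi_*)$. Since $\FCP(r,t)\le\ECP(R_r,t,\pi_*)$ for this same $\pi_*$, it suffices to bound $\ECP(R_r,t,\pi_*)-\ECP(\Id,t_*,\pi_*)$. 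I would introduce the residuals $e_i=p_i+t_*-q_{\pi_*(i)}$, the rotation displacements $\Delta R_i=(R_r-\Id)p_i$, and $\Delta t=t-t_*$, so that each summand becomes $\|e_i+\Delta R_i+\Delta t\|^2-\|e_i\|^2$.

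Expanding and summing over $i$ decomposes the difference into five groups: (i) a pure rotation term $\tfrac1n\sum_i\|\Delta R_i\|^2$; (ii) a rotation--residual cross term $\tfrac2n\sum_i\langle e_i,\Delta R_i\rangle$; (iii) a rotation--translation cross term $\tfrac2n\sum_i\langle\Delta R_i,\Delta t\rangle$; (iv) a residual--translation cross term $\tfrac2n\langle\sum_i e_i,\Delta t\rangle$; and (v) the pure translation term $\tfrac1n\sum_i\|\Delta t\|^2=\|\Delta t\|^2$. The two optimality conditions at the global minimizer $(0,t_*,\pi_*)$ perform the main cleanup. Vanishing of the translation gradient gives $\sum_i e_i=0$, so term (iv) vanishes outright. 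Vanishing of the rotation gradient at $r=0$ gives $\sum_i\langle e_i,[r]p_i\rangle=0$ for every $r$, which annihilates the $k=1$ term of the series $R_r-\Id=\sum_{k\ge1}[r]^k/k!$ appearing inside term (ii).

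For the surviving terms I would argue as in the bijective proof. Term (v) is at most $\delta_2^2=\tfrac1n\,n\delta_2^2$. For term (iii) the crude estimate $\|\Delta R_i\|\le\|R_r-\Id\|_{\mathrm{op}}\|p_i\|\le\psi_1(\|[r]\|_{\mathrm{op}})\|p_i\|\le\psi_1(\delta_1)\|p_i\|$ together with $\|\Delta t\|\le\delta_2$ gives $\tfrac2n\delta_2\psi_1(\delta_1)\sum_i\|p_i\|$, which is the mixed term of $\Delta_*$. For term (ii), after dropping its linear part, bounding the remaining $k\ge2$ tail by $\psi_2(\|[r]\|_{\mathrm{op}})\le\psi_2(\delta_1)$ and applying Cauchy--Schwarz with $\sum_i\|e_i\|^2=n\FCP(r_*,t_*)\le n f_*$ and $\sum_i\|p_i\|^2=\sigma_{\P}^2$ yields $\tfrac2n\psi_2(\delta_1)\sigma_{\P}\sqrt{n f_*}$.

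I expect term (i) to be the only genuinely delicate point, and the one where the target factor $\psi_2$ (rather than $\psi_1$) must be produced. A naive estimate $\|\Delta R_i\|\le\psi_1(\delta_1)\|p_i\|$ would only give a factor $\psi_1(\delta_1)^2$, which is strictly larger than the required $2\psi_2(\delta_1)$. The trick I would use is to rewrite $\|\Delta R_i\|^2=2\langle(\Id-R_r)p_i,p_i\rangle$ and invoke the skew-symmetry $\langle[r]p_i,p_i\rangle=0$ to kill the $k=1$ term of the exponential series, leaving only the $k\ge2$ tail and hence the factor $\psi_2(\delta_1)$; summing gives $\tfrac2n\psi_2(\delta_1)\sigma_{\P}^2$. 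Adding the bounds for (i)--(v) reproduces exactly $\Delta_*(\delta_1,\delta_2)$. Apart from this skew-symmetry cancellation, the secondary subtlety is the reduction to $r_*=0$ in the presence of a nonzero optimal translation $t_*$; everything else is routine bookkeeping parallel to the bijective case.
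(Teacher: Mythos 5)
Your proposal is correct and follows essentially the same route as the paper's proof: freeze the optimal correspondence $\pi_*$, Taylor-expand the rotation via the exponential map, cancel the linear terms using first-order optimality of $(r_*,t_*,\pi_*)$ in both $r$ and $t$, and finish with Cauchy--Schwarz, the bound $\sum_i\|e_i\|^2\le n f_*$, and the operator-norm estimate $\|[r]\|_{\mathrm{op}}\le\|r\|$, deferring the general minimizer to the same change-of-variables reduction. Your residual decomposition together with the identity $\|(R_r-\Id)p_i\|^2=2\langle(\Id-R_r)p_i,p_i\rangle$ is an algebraically equivalent reorganization of the paper's computation, which instead uses orthogonality of $R_r$ at the outset and splits $q_{\pi_*(i)}=p_i+(q_{\pi_*(i)}-p_i)$ at the Cauchy--Schwarz stage to produce the same two $\psi_2(\delta_1)$ terms.
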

%\vbox{
Using this quasi-lower bound the most straightforward way to construct a qBnB algorithm is by using Algorithm~\ref{alg} setting $x=(r,t), y= \pi $ and using the initial cube $\C_{\pi}(0) \times \C_1(0)$ as  described in Subsection~\ref{sub:problem} and the bound $\Delta_*(\delta_1,\delta_2) $ from \eqref{eq:quasiLBicp}. However, to enable simple comparison with the Go-ICP algorithm \cite{yang2016goicp} we use their BnB architecture, wherein two nested BnB are used -- an outer BnB for the rotation component and a separate inner BnB for the translation component. The quadratic quasi-lower bounds for these BnBs can be computed by setting $\delta_2=0 $ or $\delta_1=0$, respectively. Further details are provided in the appendix.
%}

\begin{figure}[t]
	\begin{tabular}{@{}c@{}c}
		\includegraphics[width=0.5\columnwidth]{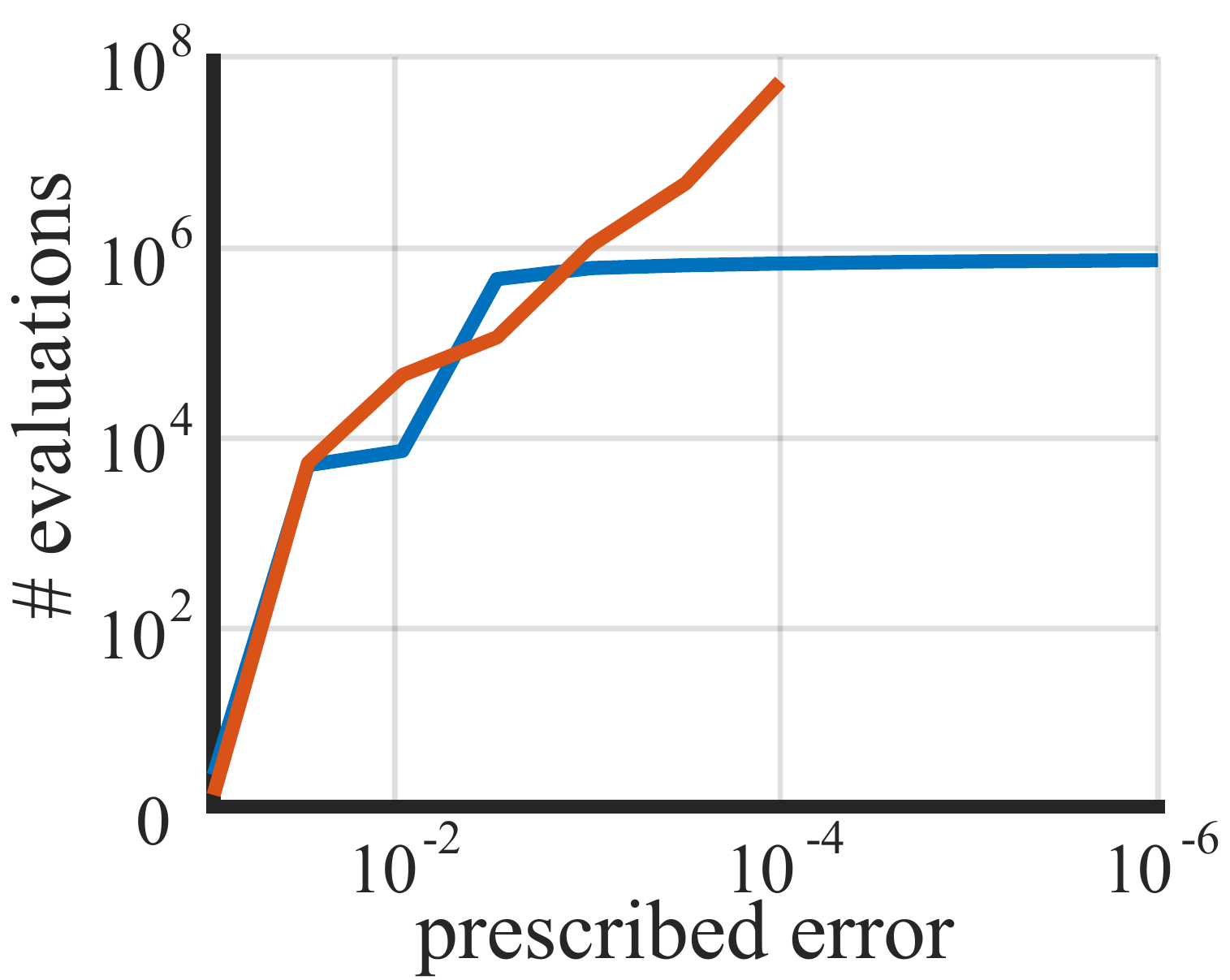}&
		\includegraphics[width=0.5\columnwidth]{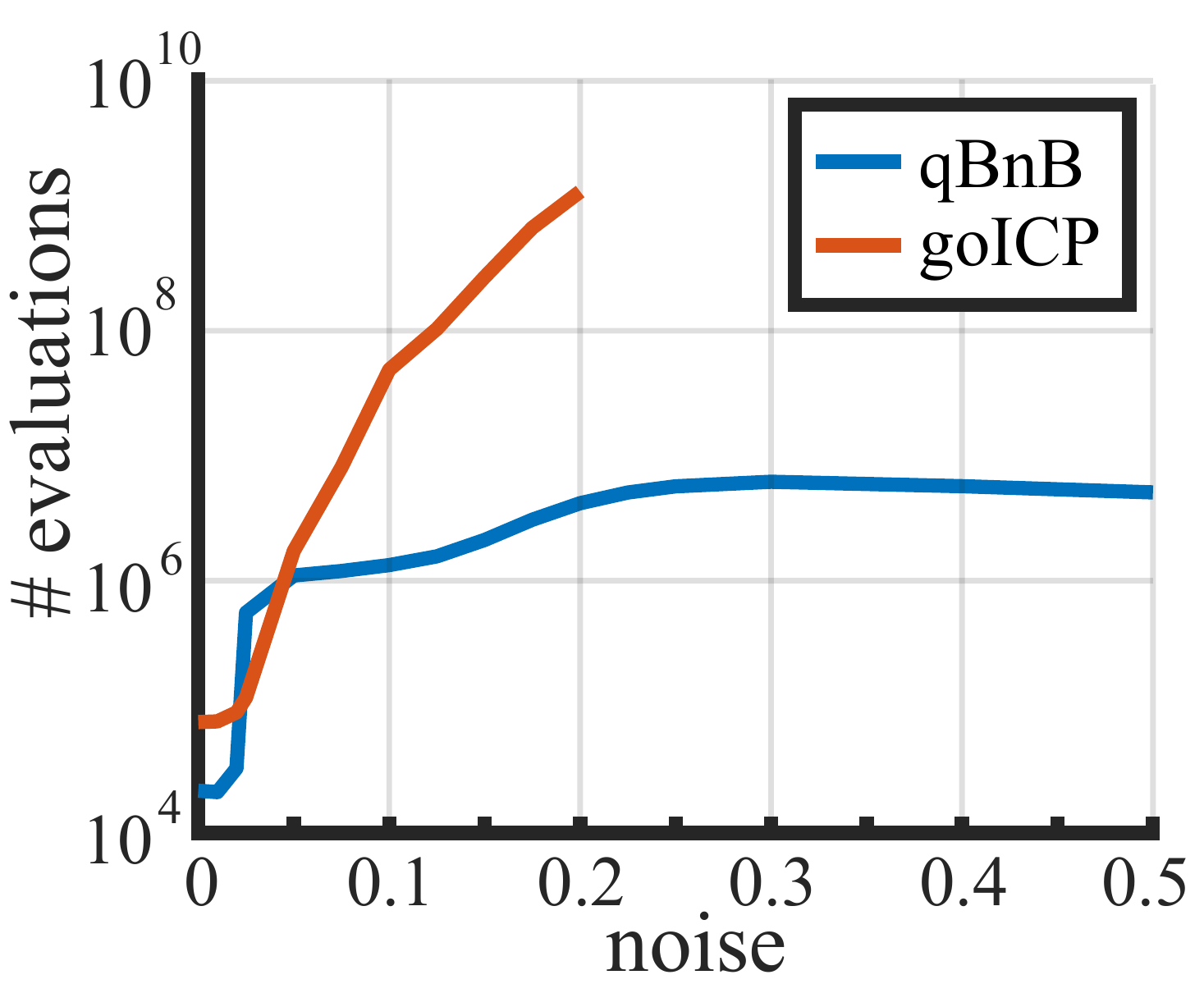}\\
		\ \ \ \ (a) & \ \ \ \ \ \ \ \ (b) 
	\end{tabular}
	\vspace{-10pt}
	\caption{Comparison of the dependence of qBnB and Go-ICP on the error tolerance (a) and noise level (b). } 
	\label{fig:goICP}
	\vspace{-10pt}
\end{figure}

\section{Results}

\paragraph{Evaluation of qBnB for rigid-CP.}

%\begin{figure*}[t]
%        \begin{tabular}{@{}c@{}c@{}c}
%                \includegraphics[width=0.3\textwidth]{evalsVsError}&
%                \includegraphics[width=0.3\textwidth]{evalsVsNoise}&
%                \includegraphics[width=0.3\textwidth]{timeVsNoise}\\
%                \ \ \ \ (a) & \ \ \ \ \ \ \ \ (b) & \ \ \ \ \ \ \ \  (c)
%        \end{tabular}
%        \caption{Comparison of the dependence of qBnB and goICP on the error tolerance (a) and noise level (b) and (c). } 
%        \label{fig:goICP}
%\end{figure*}    

We evaluate the performance of the proposed qBnB algorithm for the rigid-CP problem. We compare with Go-ICP \cite{yang2016goicp}, a state-of-the-art approach for global minimization of this problem. Our implementation is based on a modified version of the Go-ICP code, which utilizes a nested BnB (see section~\ref{sub:computing}) and an efficient approximate CP computation (see implementation details below).

We ran both algorithms on synthetic rigid-CP problems that were generated by uniformly sampling $n$ points on the unit cube to form $\P$, and then applying a random rigid transformation and Gaussian noise with std $\sigma$ to form $\Q$. Finally $\P$ and $\Q$ are translated and scaled to have zero mean and reside in the unit cube.

 Figure~\ref{fig:goICP} shows comparisons performed with varying (a) prescribed accuracy and (b) noise. Both algorithms are comparable in the low noise/low accuracy regime. This is consistent with \cite{yang2016goicp} that report better performance in the regime where the optimal energy $E^*$ is lower than the error tolerance $\epsilon$. In contrast, when accuracy or noise are increased, the complexity of the Go-ICP algorithm grows rapidly, while the complexity of qBnB stabilizes at an almost constant number of $\small{\approx} 10^6$ function evaluations. This results is consistent with the complexity analysis of Theorem~\ref{thm:complexity}.
%This too is consistent with the complexity comparison of the qBnB and  BnB algorithms presented in Theorem~\ref{thm:complexity}.
               
In Figure~\ref{fig:goICP}(a) we took $n=50$, $\sigma=0.05$, and prescribed accuracy $\epsilon$ varying between $10^{-1}$ and $10^{-6}$, and in Figure~\ref{fig:goICP}(b) we took $n=100$, $\epsilon=10^{-3}$ and $\sigma$ varying between $0$ and $0.5$. Each data point in the graph represents an average over 100 instances with the same parameters. We exclude results for Go-ICP with $\epsilon<10^{-4}$ or $\sigma>0.2 $ due to excessive run-time (over six hours). The maximal run-time of our algorithm over all experiments presented was $9$ minutes.

Figure~\ref{fig:bunny} exemplifies the applicability of our algorithm for solving larger rigid-CP problems. We used our algorithm and Go-ICP to register $500$ points sampled from the Stanford bunny \cite{standfordBunny} to a full 3D model consisting of nearly 36K vertices.
Both methods are global, thus they obtained the same results up to small indiscernible differences; therefore, we only show our alignments, as well as the evaluation counts of both algorithms. As in Figure~\ref{fig:goICP}(b), qBnB is more efficient than Go-ICP, especially in the high noise regime. For noise level of $\sigma=0.1$ our algorithm required under $7$ minutes, whereas Go-ICP took over $6$ hours to complete.  

\begin{figure}   	
		\centering
        \includegraphics[width=.95\columnwidth]{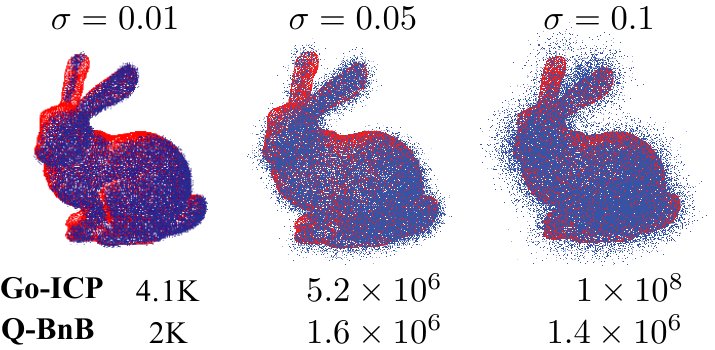}
        \vspace{-5pt}
        \caption{Registration of a noisy scan of the Stanford bunny to the full 3D model at three different noise levels. The bottom row shows the number of $\FCP $ evaluations Go-ICP and qBnB required to perform the registrations. } 
        \label{fig:bunny}
        \vspace{-10pt}
\end{figure}

\begin{figure*}[t]
	\includegraphics[width=\textwidth]{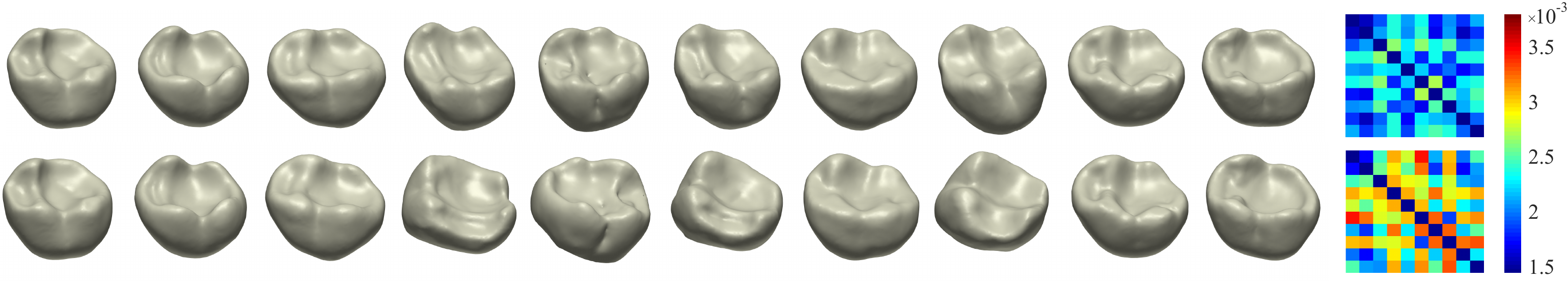}
	\vspace{-15pt}
	\caption{The top row shows the alignment computed by qBnB between the first tooth on the left and all remaining teeth. The matrix on the right shows all pairwise distances computed for the teeth. The bottom row shows the analogous result for Auto3dgm \protect{\cite{boyer2015new}}, a  local alignment algorithm used in biological morphology.} 
	\label{fig:ateles}
\end{figure*}   

\vspace{-10pt}
\paragraph{Evaluation of qBnB for rigid-bijective.}
We compare our qBnB algorithm for the rigid bijective problem with the BFS version of the BnB algorithm proposed in \cite{li20073d}. This algorithm modifies the standard $\ell^2$ energy we study here, and globally optimizes it over the same feasible set $\Gbi \times \Pibi$. in Figure~\ref{fig:hartley} we examined the performance of both algorithms on the problem of registering 2D frog silhouettes \cite{frogs} and 3D chair meshes \cite{giorgi2007shape}, sampled at $50$ points, with accuracy set at $\epsilon=10^{-6}$. Both algorithms returned identical results, which are visualized on the top of Figure~\ref{fig:hartley}. However qBnB took $0.5$ and $6.5$ seconds to solve the 2D and 3D problems respectively, while \cite{hartley2009global} required $11$ seconds to solve the 2D problem, and did not converge to the required accuracy for the 3D problem after $10^7$ evaluations (over $1$ hour).   The graphs in Figure~\ref{fig:hartley} show the number of evaluations the algorithms performs in each depth level of the BFS. In both cases qBnB required significantly less evaluations than \cite{li20073d}, and furthermore we note that asymptotically qBnB requires a constant number of evaluations at each depth, while \cite{li20073d} requires $\sim 2^{Dg/2}$ iterations at depth $g$, as shown by the log-scale aymptotes in the figure. This illustrates the theoretical results of Theorem~\ref{thm:complexity}: As for  BnB algorithms  $\epsilon \sim 2^{-g} $, they require $\sim \epsilon^{-D/2}  $ evaluations. In contrast, the dependence of qBnB on $g \sim \log (1/\epsilon) $ is linear.

\vspace{-10pt}
\paragraph{Application to biological morphology.}

%\subsection{Applications to morphology}

Biological morphology is concerned with the geometry of anatomical shapes. In this context, Auto3dgm \cite{boyer2015new,puente2013distances} is a popular algorithm that seeks for a plausible, consistent alignment of a collection of scanned morphological shapes. It is  based on solving the rigid bijective matching problem between pairs of shapes, and on a global step that synchronizes the rotations. The rigid bijective matching problem is solved using an alternating ICP-like algorithm, initialized from one of the $2^3$  rotations that align the principal axes of the two shapes. In this application, reflections  are typically allowed, and our algorithm can be easily extended to this case as $\Od$ is just a disjoint union of two copies of $\SO$.       

Figure~\ref{fig:ateles} compares the performance of qBnB to the pairwise matching algorithm of Auto3dgm for aligning a challenging collection of ten almost-symmetric teeth of spider monkeys (Ateles) from \cite{morph}. Details on the data for this experiment can be found in Appendix~\ref{app:morph}. Each bone surface is sampled at $200$ points using farthest point sampling. The pairwise energy between all pairs, and the alignment computed between the first tooth and all remaining teeth is shown for qBnB (top) and  Auto3dgm (bottom). We see that qBnB obtains lower energy solutions and more plausible alignments which have been verified as semantically correct by biological experts. This comes at the cost of higher complexity ($2-5$ minutes as opposed to $\sim 3$ seconds).

\vspace{-10pt}
\paragraph{Implementation details.}
The qBnB algorithm for rigid-CP was implemented in C++, based on the implementation of \cite{yang2016goicp}. The algorithm for rigid-bijective matching was implemented in Matlab; We solved for the bijective mapping used the excellent MEX implementation of \cite{bernard2016fast} for the auction algorithm \cite{bertsekas1998network}. Timings were measured using a Intel 3.10GHz CPU. 

%The frog silhouettes in Figure~\ref{fig:hartley} was created by Pedro de Siracusa and downloaded from PhyloPic, and the chair models come from \cite{giorgi2007shape}. The teeth models shown in Figure~\ref{fig:ateles} come from Morphosource, and additional details on each tooth appear in the supplementary material.  

%\sk{W? did I drop something here? also, maybe move to the end, or soften the blow a little?}with the exception of the timing of the multiple experiments in Figure~\ref{fig:goICP}(c) where the experiments were run in parallel on a 40-core CPU computer \sk{is the code itself parallel?},  timing presented in this section was measured on a standard PC with an Intel i5 3.30 GHz CPU. \sk{rephrase} The qBnB algorithm for rigid-CP was implemented in C++ \sk{say it's a modification of the goicp code...?} and the algorithm for rigid-bijective matching was implemented in Matlab. 

For the optimization of large scale rigid-CP with moderate accuracy $\epsilon=10^{-3} $ as in Figure~\ref{fig:goICP}(b) and Figure~\ref{fig:bunny}, we followed the implementation of GoICP \cite{yang2016goicp} and used the 3D Euclidean distance transform (DT) from \cite{fitzgibbon2003robust} with a $300 \times 300 \times 300$ grid; this provides a very fast approximation of the closest point computation, compared with the standard (accurate) KD-tree based computation. For evaluating the complexity as a function of accuracy in Figure~\ref{fig:goICP}(a), we used both GoICP and qBnB without DT transform.

%  since we cannot require high accuracy with the DT transform. In parts (b)-(c) we used the DT for both algorithms. To evaluate the error induced by the DT transform we compared the first ten experiments per noise level with the corresponding accurate solution obtained from our algorithm with DT computation and accuracy of $10^{-6}$. We find that  the DT transform introduces an error of up to $~0.005 $, slightly higher than the allowed error tolerance.

% The timing of both algorithms is roughly the number of CP evaluations multiplied by the complexity of each evaluation. In \cite{yang2016goicp} it is suggested to use the 3D Euclidean distance transform (DT) from \cite{fitzgibbon2003robust}  with a $300 \times 300 \times 300$ grid to achieve a very fast approximation of the CP computation in comparison with the standard (accurate) CP computation using KD trees. In Figure~\ref{fig:goICP}(a) we used both algorithms without the DT transform since we cannot require high accuracy with the DT transform. In parts (b)-(c) we used the DT for both algorithms. To evaluate the error induced by the DT transform we compared the first ten experiments per noise level with the corresponding accurate solution obtained from our algorithm with DT computation and accuracy of $10^{-6}$. We find that  the DT transform introduces an error of up to 

%\subsection{Comparison with Lipschitz method}

\section{Future work and conclusions}
 We presented the qBnB framework for globally optimizing $D$-quasi-optimizable functions, and demonstrated theoretically and empirically the advantage of this framework over existing BnB algorithm. Future challenges include applying the qBnB framework to other rigid registration problems that can handle outliers better than the standard $\ell^2$ energy, as well as using this framework for global optimization of $D$-quasi-optimizable functions in other knowledge domains.

\paragraph{Acknowledgements} The authors would like to thank Prof. Ingrid Daubechies for helpful discussions. Research is partially supported by Simons Math+X Investigators Award 400837.  
\newpage
{\small
\bibliographystyle{abbrv}
\bibliography{quasibib}
}
%\newpage

\newpage
\onecolumn
\appendix
\section{Proofs}
\setcounter{equation}{0}
\numberwithin{equation}{section}
 We begin by proving that $D$-optimizable functions are Lipschitz on the cube $\Czero$ as we claimed in the beginning of Subsection~\ref{sub:optimizing}. In the remainder of the Section we restate and prove the theorems and lemmas stated in the main text.

\begin{lemma}
	If $F$ is $D$-quasi optimizable  in $\Czero$, then there exists $L>0$ such that.	\begin{equation}\label{eq:lip}
	F(x_1)-F(x_2) \leq L\|x_1-x_2\|, \forall x_1,x_2\in \Czero
	\end{equation}
\end{lemma}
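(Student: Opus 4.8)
The plan is to exploit the fact that $F$ is the pointwise minimum of the finite family of smooth functions $\{E(\cdot,y)\}_{y\in\Y}$, and that a pointwise minimum of functions sharing a common Lipschitz constant is itself Lipschitz with that same constant. Since $\Y$ is finite and each $E(\cdot,y)$ is smooth, the crux is to extract a single Lipschitz constant $L$ that is valid uniformly across all $y\in\Y$, and then transfer it to the minimum.

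First I would fix $y\in\Y$ and bound $E(\cdot,y)$ on $\Czero$. Because $\Czero$ is a closed cube it is compact and convex, and because $E(\cdot,y)$ is smooth its gradient $\nabla_x E(\cdot,y)$ is continuous; hence $L_y:=\max_{x\in\Czero}\|\nabla_x E(x,y)\|$ is finite. Integrating the gradient along the line segment joining any two points of the convex set $\Czero$ (equivalently, applying the mean value inequality) yields $E(x_1,y)-E(x_2,y)\le L_y\|x_1-x_2\|$ for all $x_1,x_2\in\Czero$. Setting $L:=\max_{y\in\Y}L_y$, which is finite precisely because $\Y$ is finite, produces a uniform Lipschitz bound for every member of the family.

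The second step passes from the family to the minimum. Given $x_1,x_2\in\Czero$, let $y_2\in\Y$ attain the minimum defining $F(x_2)$, so that $F(x_2)=E(x_2,y_2)$. Then $F(x_1)=\min_{y\in\Y}E(x_1,y)\le E(x_1,y_2)\le E(x_2,y_2)+L\|x_1-x_2\|=F(x_2)+L\|x_1-x_2\|$, which is exactly \eqref{eq:lip}. Since $x_1,x_2$ are arbitrary, swapping their roles recovers the symmetric estimate, so $F$ is genuinely $L$-Lipschitz on $\Czero$.

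There is no serious obstacle in this argument; the two points that require care are the uniformity of the Lipschitz constant across $\Y$ and the convexity of the domain. Uniformity is where the finiteness of $\Y$ is essential, as an infinite family of smooth functions need not admit a common finite Lipschitz constant. Convexity of the cube $\Czero$ is needed so that the segment between $x_1$ and $x_2$ remains inside the region where the gradient bound $L_y$ holds, which legitimizes the mean value step.
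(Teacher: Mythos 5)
Your proof is correct and follows essentially the same route as the paper's: both select a minimizer $y_2$ of $E(x_2,\cdot)$, use $F(x_1)\le E(x_1,y_2)$, and invoke a Lipschitz constant obtained as the maximum of $\|\nabla_x E(\cdot,y)\|$ over the compact cube $\Czero$ and the finite set $\Y$. Your write-up is slightly cleaner in that it makes the compactness, convexity, and mean-value steps explicit and avoids the paper's (unnecessary) case split on the sign of $F(x_1)-F(x_2)$.
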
 
\begin{proof}
If  $F(x_1) \leq F(x_2)$ then \eqref{eq:lip} is obvious. Now assume $F(x_1)>F(x_2) $ and let $y_2$ be the minimizer of $E(x_2,\cdot) $. Then due to the differentiability of $E(\cdot,y_2) $,
$$F(x_1)-F(x_2) \leq E(x_1,y_2)-E(x_2,y_2)\leq L \|x_1-x_2\| $$
where $L$ is the maximum of the norm of the gradient of $E(\cdot,y) $ over all $x \in \Czero$ and $y \in \Y $. 
\end{proof}

\complexity*
\begin{proof}
\textbf{Part 1.} We begin with a general discussion of the complexity of Algorithm~\ref{alg} in both the BnB and quasi-BnB version, and prove the upper bound on $\nb$ and $\nq$ simultaneously. To do so we denote 
\begin{equation}\label{eq:DeltaAlpha}
\Delta_\alpha(\delta)=M \delta^\alpha, \alpha=1,2  .
\end{equation}
We denote by $n_\alpha$ the number of $F$-evaluations used by the algorithm in each case, so 
$$n_1=\nb, n_2=\nq .$$
We also note that while our bounds \eqref{eq:bijQlb},\eqref{eq:quasiLBicp} include besides $M \delta^2 $ also higher terms in $\delta$ these terms can be absorbed into $M\delta^2$ with a larger value of $M$. 

In the proof we will  call $g$ form Algorithm~\ref{alg}  the \emph{generation} of the algorithm. We begin with showing the algorithm terminates and bounding the final value of $g$ which we denote by $g_f$. First note if $\C_{h_i}(x_i) \in L_g $ contains a global minimum $x_*$, then its sub-cubes will always be added to $L_{g+1}$. This is because since  $\C_{h_i}(x_i)$ contains  a global minimum, $\lb_i $ is in fact a true lower bound for the minimal value of $F$ on the cube, which in this case is the global minimum $F_*$. Thus the value of $\lb_i\leq F_* \leq \ub$ and so the cube will be preserved.  It also follows that the global lower bound $\lb$ is in fact a true lower bound, since by definition it is smaller than the quasi-lower bound in a cube containing a global minimizer. It follows that if the algorithm does terminate, then the output $x_*$ is an $\epsilon$-optimal solution as its difference from the minimum $F^*$ satisfies
\begin{equation} 
F(x_*)-F^* = \ub-F^* \leq \ub-\lb \leq \epsilon  .
\end{equation}
Now to bound $g_f$ note that cubes in $L_g$ have half-edge length $h(g) \equiv h_0 2^{-g} $. the algorithm must terminate once it visits all cubes of generation $g$ whose edge length $h(g)$ satisfies
\begin{equation}\label{eq:deltaEpsilon} \Delta_\alpha(\sqrt D h(g)) \leq \epsilon .\end{equation} 
This is because for all cubes $\C_{h_j}(x_j)$ in this generation,
 $$\ub-\lb_j \leq \ub_j-\lb_j=\Delta_\alpha(\sqrt D h_j) \leq \epsilon .$$
 and by taking the minimum over $j$ we obtain that $\ub -\lb \leq \epsilon $. 
 
Some algebraic manipulation shows that \eqref{eq:deltaEpsilon} occurs when  
\begin{equation}\label{eq:gf}
g_f=g_f(\epsilon)=\lceil 1/\alpha \log \frac{\bar C_\alpha}{\epsilon} \rceil \text{ where } \bar C_\alpha=M(\sqrt D h_0)^\alpha  
\end{equation}
where we use $\log=\log_2$ throughout this proof. The number of $F$ evaluations $n_\alpha$ is bounded by  the worst case scenario where all cubes need to be divided in all generations 
\begin{align*}
n_\alpha\leq \sum_{g=0}^{g_f} 2^{Dg}=2^{Dg_f} \sum_{g=0}^{g_f} 2^{-Dg} \leq 2^{Dg_f} \sum_{g=0}^{\infty} 2^{-Dg}=2^{Dg_f}\frac{1}{1-2^D}\leq^\eqref{eq:gf} \left[\frac{2^D}{1-2^D} \bar C_\alpha^{D/\alpha} \right]\epsilon^{-D/\alpha}.
\end{align*} 
this proves the upper bound on $n_1,n_2 $. 

\textbf{Part 2.} We now show a lower bound on $\nb$. 
Our first step is to show that the sub-cubes of a given cube  have better=larger lower bounds. To see this let $\C^1=\C_{h_1}(x_1)$ be a cube, and  let $\C^2=\C_{h_2}(x_2) $ be one of its sub-cubes. Then $h_2=h_1/2 $ and $|x_1-x_2|=\sqrt{D}h_1/2 $. It follows that, using the notation of \eqref{eq:DeltaAlpha} with $\alpha=1$,
\begin{equation}\label{eq:x1x2}
F(x_2) \geq F(x_1)-M\sqrt{D}h_1/2
\end{equation}
Now denoting by $\lb_1$ and $\lb_2$ the lower bounds computed for the cubes $\C^1$ and $\C^2$ respectively, we have 
\begin{equation*}
\lb_2=F(x_2)-M\sqrt D h_1/2\geq^{\eqref{eq:x1x2}} F(x_1)-M\sqrt D h_1=\lb_1,
\end{equation*} 
and so we have $\lb_2 \ \geq \lb_1 $ as we stated.

Next note that by the quadratic bound \eqref{eq:quadratic} we have that for a global minimum $x_*$ and  $\eta=\sqrt{\epsilon/C} $,
\begin{equation} \label{eq:epsClose}
F(x)-F(x_*) \leq \epsilon, \text{ for all } x \in B_{\eta}(x_*) .
\end{equation}
Now, let $g_F$ denote the value of $g_f(2\epsilon) $ from \eqref{eq:gf}, for the case $\alpha=1$. Recall that $g_f(2\epsilon) $ is defined as the first integer for which \eqref{eq:deltaEpsilon} holds, where $\epsilon$ is replaced by $2\epsilon$, and $\alpha=1$. Thus for $g<g_F$ we have that 
\begin{equation}\label{eq:twoEps}
M \sqrt{D} h(g) >2\epsilon 
\end{equation}

 Let $\C_{h_i}(x_i)$ be a cube of generation $g_0$ containing $x_*$, where $g_0$ is large enough so that the diameter of the cube is smaller than $\eta$ and thus it is contained in $B_\eta(x_*)$. This occurs for 
$$g_0=\lceil \log(\bar C/\sqrt{\epsilon}) \rceil, \text{ where } \bar C=2h_0 \sqrt{CD}  .$$    

Every sub-cube $\C_{h_j}(x_j) $ of $\C_{h_i}(x_i)$, from any generation  $g_0 \leq g<g_F$, satisfies
\begin{equation}\label{eq:lbLow}
\lb_j=F(x_j)-M\sqrt{D}h_j <^{\eqref{eq:twoEps}} F(x_j)-2\epsilon\leq^{\eqref{eq:epsClose}} F(x_*)-\epsilon.
\end{equation}
In particular it follows that the cube $\C_{h_i}(x_i)$, and all its sub-cubes, will be visited during the BnB search. This is because we saw that lower bounds improve by refinement, and so by \eqref{eq:lbLow} any cube from the earlier generations $g<g_0$ which contains $\C_{h_i}(x_i)$, also has lower bounds which are lower that the global minimum (by at least $\epsilon$) and so such a cube would not be removed from the search.. 

We can now bound $\nb$ by the number of subcubes of $\C_{h_i}(x_i) $ at the $g_F-1$ generation alone: 
\begin{equation}\label{eq:nbLB}
\nb \geq 2^{D(g_F-1-g_0)}
\end{equation}
Now 
$$g_F-1-g_0=-1+\lceil \log \frac{\bar C_1}{2\epsilon} \rceil-\lceil \log(\bar C/\sqrt{\epsilon}) \rceil \geq \log(\bar C_1/(2\bar C \sqrt{\epsilon}))-2=\log(\bar C_1/(8\bar C \sqrt{\epsilon}))  .$$
So returning to \eqref{eq:nbLB} we obtain 
$$\nb \geq 2^{D(g_f-1-g_0)} \geq \left(\frac{\bar C_1}{8\bar C} \right)^D \epsilon^{-D/2}  .$$

\textbf{Part 3.} We now turn to prove the last part of the theorem. Let $\J(\ell)$ denote the set of indices $k$ for which $(x_{\ell}^*,y_k^*) $ is a minimizer. Note the we always have that $\ell$ is in $\J_{\ell}$. Let $m$ be half of the minimun over the minimal eigenvalue of the hessian of $E(\cdot,y_{\ell}^*) $ at $x_{\ell}^* $ for all $\ell$.  The assumption that $E$ has a finite number $N$ of minimizers $x_{\ell}^*,y_{\ell}^* $, with strictly positive definite hessian, implies that $m>0$, and so for small enough positive $\eta$,
\begin{equation} \label{eq:quadBelow}
F(x)-F(x_\ell^*)=\min_{k \in \J(\ell)} E(x,y_k^*)-E(x_\ell^*,y_k^*) \geq m\|x-x_\ell^*\|^2, \, \forall 1 \leq \ell \leq N \text{ and } \forall x \in B_{\eta}(x_\ell^*)  .
\end{equation}

The minimum of $F$ on $\Czero\setminus \bigcup_i B_{\eta/2}(x_\ell^*) $ is strictly larger than $F^*$. Therefore there exists some $g_0 $ independent of $\epsilon$, such that all cubes of generation $g_0$ which are not contained in one of the balls $B_{\eta}(x_\ell^*) $ will be removed in the $g_0$-th stage.

We now claim that for $g \geq g_0$, $g$-th generation cubes $\C_{h_i}(x_i) $ contained in one of the  balls $B_{\eta}(x_\ell^*) $   will be removed if 
\begin{equation}\label{eq:necessary}
\|x_i-x_\ell^*\|_{\infty} > \sqrt{ \frac{2MD}{m} } h_i=\sqrt{\frac{2\Delta_*(\sqrt{D}h_i)}{m}}.
\end{equation}
This is because
\begin{align*}
\lb_j&=F(x_i)-\Delta_*(\sqrt{D}h_i) \geq^{\eqref{eq:quadBelow}} F(x_\ell^*)+m\|x_i-x_\ell^*\|^2-\Delta_*(\sqrt{D}h_i)\\
&=\ub+(F(x_\ell^*)-\ub)+m\|x_i-x_\ell^*\|^2-\Delta_*(\sqrt{D}h_i)\geq^{(*)} \ub+m\|x_i-x_\ell^*\|^2-2\Delta_*(\sqrt{D}h_i)\\
&\geq \ub+m\|x_i-x_\ell^*\|_{\infty}^2-2\Delta_*(\sqrt{D}h_i)  > ^{\eqref{eq:necessary}}\ub 
\end{align*}
where $(*)$ follows from the fact that if $\C_{h_i}(x_i) $ is the $g$-th generation cube containing $x_\ell^* $, then
$$F(x_\ell^*)-\ub \geq F(x_\ell^*)-\ub_i=F(x_\ell^*)-F(x_i) \geq -\Delta_*(\sqrt{D}h_i) .$$

Now for $g \geq g_0$, the condition \eqref{eq:necessary} is not fulfilled in at most $\bar C=(\sqrt{ \frac{2MD}{m} }+2)^D  $ cubes surrounding each minimizer, and so in total only $N \bar C $ cubes can survive each generation $g >g_0 $. The important point is that this number is independent of $\epsilon$. So the total number of iterations is bounded by the sum of the total number of cubes in all generations  $g \leq g_0$, which is some constant indepentent of $\epsilon$ which we denote by $b$, and the constant $N\bar C$ multiplied by the remaining number of iterations $g_f-g_0 $, that is  
$$n_2 \leq b+(g_f-g_0)N \bar C \leq n_2 \leq b+g_fN \bar C \leq^{\eqref{eq:gf}} b+ N \bar C (1/2 \log \frac{C_2}{\epsilon} +1)  $$   
This bound can be replaced with a bound of the form \eqref{eq:linearConvergence} with an appropriate constant. 
\end{proof}

\qlbBij*
\begin{proof}
To conclude the proof of the theorem for the case $r_*=0$ we need to show
\begin{lemma}\label{sub:norms}
	For all $r \in \RR^D$,
	\begin{equation}\label{eq:operator}
	\|[r]\|_{\mathrm{op}} \leq \|r\|
	\end{equation}
\end{lemma}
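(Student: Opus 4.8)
The plan is to bound the operator norm of the skew-symmetric matrix $[r]$ by its Frobenius norm, while exploiting the special spectral structure of real skew-symmetric matrices so as to avoid the spurious factor of $\sqrt 2$ that the naive inequality $\|[r]\|_{\mathrm{op}}\le\|[r]\|_F$ would cost. The key observation is that $\|[r]\|_{\mathrm{op}}^2$ equals the largest eigenvalue of the symmetric positive semidefinite matrix $[r]^T[r]=-[r]^2$, whereas $\|[r]\|_F^2=\mathrm{tr}([r]^T[r])$ is the sum of all its eigenvalues; for a skew-symmetric $[r]$ these eigenvalues occur with even multiplicity, so the largest one is at most half of the trace.

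First I would compute the Frobenius norm directly from the definition of $[r]$. Each of the $D$ coordinates of $r$ appears exactly once below the diagonal (as $r_k$) and once above it (as $-r_k$, by skew-symmetry), with zeros on the diagonal, so $\|[r]\|_F^2=2\|r\|^2$. Next I would invoke the normal form for real skew-symmetric matrices: $[r]$ is orthogonally similar to a block-diagonal matrix built from $2\times 2$ blocks $\left(\begin{smallmatrix} 0 & -\lambda_j \\ \lambda_j & 0 \end{smallmatrix}\right)$, together with a single zero row and column when $d$ is odd. Equivalently, the nonzero eigenvalues of the positive semidefinite matrix $-[r]^2$ are the numbers $\lambda_j^2$, each of even multiplicity. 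Writing $\mu_1\ge \mu_2\ge\cdots\ge 0$ for the eigenvalues of $-[r]^2$, this forces $\mu_1=\mu_2$ whenever $[r]\neq 0$ (the case $[r]=0$ being trivial), and hence $\mu_1\le \tfrac12\sum_k \mu_k=\tfrac12\|[r]\|_F^2$. Combining this with the Frobenius computation yields $\|[r]\|_{\mathrm{op}}^2=\mu_1\le \|r\|^2$, which is exactly \eqref{eq:operator}.

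The only nontrivial ingredient, and the step I expect to be the main obstacle, is the pairing of the singular values, i.e.\ the normal form for skew-symmetric matrices; this pairing is precisely what buys the improvement over the trivial $\sqrt 2$ bound. Since the paper only needs $d\in\{2,3\}$, an alternative that sidesteps quoting the general normal form is to verify these two cases by hand: for $d=2$, the matrix $\left(\begin{smallmatrix} 0 & -r_1 \\ r_1 & 0 \end{smallmatrix}\right)$ has operator norm exactly $|r_1|=\|r\|$, and for $d=3$ the skew-symmetric matrix with sub-diagonal entries $r_1,r_2,r_3$ has characteristic polynomial $-\lambda^3-\|r\|^2\lambda$, hence eigenvalues $0,\pm i\|r\|$ and operator norm exactly $\|r\|$. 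Either route establishes the claim (in fact with equality).
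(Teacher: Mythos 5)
Your proof is correct and is essentially the paper's own argument: both exploit the fact that the spectrum of a real skew-symmetric matrix comes in pairs (the paper writes the eigenvalues of $[r]$ as $\pm a_j i$, you invoke the even multiplicity of the eigenvalues of $[r]^T[r]=-[r]^2$ via the normal form), so the largest squared singular value is at most half of $\|[r]\|_F^2=2\|r\|^2$. The only differences are cosmetic: you spell out the Frobenius-norm computation and offer an optional hand verification for $d\in\{2,3\}$, which the paper does not need.
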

\begin{proof}
	The non-zero eigenvalues of a skew-symmetric real matric $[r] $ can be written as
	$$a_1 i, -a_1i, a_2i,-a_2i,\ldots $$
	where $a_1 \geq a_{2}\ldots >0 $. Therefore
	$$\|[r]\|_{\mathrm{op}}^2=a_1^2 \leq 1/2\sum_ia_i^2=1/2\|[r]\|_F^2=\|r\|^2 .$$	
\end{proof}

	For the general case $r_*\neq 0 $, we define a change of variable $\tilde p_i=R_{r_*}p_i $ and denote by $\EbiTilde$ the energy resulting by replacing $p_i$ with $\tilde p_i$ in the definition of $\Ebi$. Then for all $R_0,\pi $ we have
$$\EbiTilde(R_0R_{r_*}^T,\pi)=\Ebi(R_0,\pi). $$
In particular $\tilde{r}_*=0$ is a minimizer of $\FbiTilde$ which is defined by replacing $\Ebi$ with $\EbiTilde$ in the definition of $\Fbi$. We claim that there exists $r_1$ such that 
\begin{equation} \label{eq:comm}
R_{r_1}=R_rR_{r_*}^T \text{ and } \|r_1\| \leq \|r-r_*\|
\end{equation}
In the case $d=2 $ we can identify $R_rR_{r_*}^T $ with $e^{i(r-r_*)} $ and so we can simply choose $r_1=r-r_* $. For $d=3$ it is proven in Lemma~3.2 in \cite{hartley2009global} that the angular distance between $R_r $ and $R_{r_*} $ is smaller or equal to $\|r-r_*\| $. This means that the angular distance between $R_rR_{r_*}^T$ and the identity is less than $\|r-r_*\|$ and thus implies the existence of $r_1$ satisfying \eqref{eq:comm}. 

Now using the bound from \ref{eq:bijQlb} for $\tilde{F}$ which is minimized at zero we obtain
$$\Fbi(r)-\Fbi(r_*)=\FbiTilde(r_1)-\FbiTilde(0)\leq \frac{2}{n}\psi_2(\delta)\sigma_{\tilde \P}\sigma_{\Q}=\frac{2}{n}\psi_2(\delta)\sigma_{\P}\sigma_{\Q} \quad  .$$

\end{proof}

\qlbCP*

\begin{proof}
	The proof is very similar to the proof of Theorem~\ref{thm:quasiBi}. Let us first consider the case $(r_*,t_*)=(0,0) $, and let $\pi_*$ be the corresponding mapping so that $(\Id,0,\pi_*) $ minimizes $\ECP$. Then 
	\begin{align*}
	\FCP(r,t)-\FCP(0,0)&\leq \ECP(r,t,\pi_*)-\ECP(0,0,\pi_*)\\
	&=\frac{1}{n}\sum_{i=1}^n\left[2\langle (\Id-R_r)p_i,q_{\pi_*(i)} \rangle+2\langle R_rp_i,t\rangle-2\langle t,q_{\pi_*(i)} \rangle+\|t\|^2 \right]\\
	&=^{(*)}\frac{1}{n}\sum_{i=1}^n \left[2\sum_{k=2}^\infty \frac{1}{k!}\langle[r]^kp_i,q_{\pi_*(i)}\rangle+2\sum_{k=1}^\infty \frac{1}{k!}\langle[r]^kp_i,t\rangle]+\|t\|^2 \right]\\
	&\leq \frac{1}{n}\sum_{i=1}^n \left[2\sum_{k=2}^{\infty}\frac{1}{k!}\|r\|_{\mathrm{op}}^k\|p_i\|(\|p_i\|+\|q_{\pi_*(i)}-p_i\|)+2\sum_{k=1}^{\infty}\frac{1}{k!}\|r\|_{\mathrm{op}}^k\|p_i\| \ \|t\|+\|t\|^2 \right]\\
	&\leq  \frac{1}{n} \left[2\psi_2(\delta_1) (\sigma_{\P}^2+\sigma_{\P}[\sum_i \|q_{\pi_*(i)}-p_i\|^2]^{1/2})  +2 \psi_1(\delta_1)\delta_2 \sum_i \|p_i\|+n\delta_2^2  \right]\\
	&\leq  \frac{1}{n} \left[2\psi_2(\delta_1) (\sigma_{\P}^2+\sigma_{\P}\sqrt{n f_*})  +2 \psi_1(\delta_1)\delta_2 \sum_i \|p_i\|+n\delta_2^2  \right]
	\end{align*}
	Here $(*)$ follows from the fact that $\ECP(\cdot,\cdot,\pi_*) $ is minimized at the origin and so the first order terms cancel out, and the next inequalities follow from the Cauchy-Schwarz inequality and from Lemma~\ref{sub:norms}.
	
	For general $(r_*,t_*) $, we use a change of variables $\tilde p_i=R_*p_i, \, \tilde q_i= q_i-t_* $, and denote by $\ECPtilde$ and $\FCPtilde$ the functions obtained by replacing $p_i,q_i$ by $\tilde p_i,\tilde q_i $ in the definition of these functions. For given $(r,t,\pi) $ we have
	$$\ECP(R_r,t,\pi)=\ECPtilde(R_rR_*^T,t-t_*,\pi) $$
	Let $r_1\in \RR^D$ satisfying \eqref{eq:comm}. By applying the theorem to $\FCPtilde$ which is minimized at $(0,0)$ we obtain
	\begin{align*}
	\FCP(r,t)-\FCP(0,0)&=\FCPtilde(r_1,t-t_*)-\FCPtilde(0,0) \\
	&\leq \frac{1}{n} \left[2\psi_2(\delta_1) (\sigma_{\P}^2+\sigma_{\P}\sqrt{n f_*})  +2 \psi_1(\delta_1)\delta_2 \sum_i \|p_i\|+n\delta_2^2  \right] 
	\end{align*}
\end{proof}

\section{BnB for rigid closest point}
In the following we explain how we construct a quasi-BnB framework for the rigid CP problem based on the BnB architecture proposed in Go-ICP \cite{yang2016goicp}. Following Go-ICP, we use a nested BnB structure: We perform an ``outer'' BnB search on the rotation space, wherein the upper and lower bounds are functions of the translation component $t$; in turn, to compute these bounds we perform an ``inner'' BnB search over the variable $t$. Namely, for given $r_i$, we define an upper bound for the outer BnB by
\begin{equation}
\ECPbar(r_i)=\min_{t \in \C_1(0),\pi \in \PiCP} \ECP(r_i,t,\pi).
\end{equation}

To compute a quasi-lower bound for the outer BnB we note that if $(r_*,t_*)$ minimizes $\FCP$ and $r_* \in \C_h(r_i)$, then by using  \eqref{eq:quasiLBicp} where we set $r=r_i $ , take $\delta_1 $ to be the maximal distance of a point in the cube from the center, $t=t_*$ and  $\delta_2=0$  we obtain
\begin{equation}
\FCP(r_*,t_*)\geq \FCP(r_i,t_*)-\frac{2}{n}\left(1+\sqrt{\frac{f_*}{\sigma_{\P}}}\right)\sigma_{\P}\psi_2(\sqrt{D}h),
\end{equation}
and since $\FCP(r_i,t_*)\geq \ECPbar(r_i)$ it follows that if $r_* \in \C_h(r_i)$ then
\begin{equation}
\FCP(r_*,t_*) \geq \ECPbar(r_i)-\frac{2}{n}\left(1+\sqrt{\frac{f_*}{\sigma_{\P}}}\right)\sigma_{\P}\psi_2(\sqrt{D}h).
\end{equation}
The RHS of the equation above gives us our quasi lower bound for the rotation quasi BnB. To compute $\ECPbar(r_i) $ we compute a BnB in translation space, where throughout the translation BnB the rotation coordinate $r_i$ is fixed. For a given translation cube $\C_h(t_j)$ an upper bound for the value of $\ECPbar(r_i) $ is given by evaluation of $\FCP(r_i,t_j)$. If $t_*$ is a minimizer of $\FCP(r_i,\cdot) $ then a quasi-lower bound in the cube is given by  
\begin{equation}
\ECP(r_i,t_*)\geq \ECP(r_i,t_j)-\frac{dh^2}{n}.
\end{equation} 
We note that this bound is similar to what we would get by setting $\delta_1=0$ and $\delta_2$ to be the maximal distance in the cube from $t_j$ in \eqref{eq:quasiLBicp}. Although the bound does not follow directly from this equation the derivation is similar, and can be obtained by studying the behavior of a minimizer of $E(r_i,\cdot,\pi_*) $, so we do not go into the details. Finally we note that when the quasi-lower bounds in the outer or inner BnB is lower than zero we replace it with zero.

The rest of the architecture of the BnB is also borrowed from Go-ICP. We use best-first-search, where the cube with the lowest lower bound is visited first. Every time the upper bound is improved, an ICP algorithm is run to improve the resolution of the solution. For more details see \cite{yang2016goicp}.

\section{Morphological data}\label{app:morph}
The morphological data for the experiment shown in Figure~\ref{fig:ateles} comes from the MorphoSource dataset. The figure shows ten different second mandibular molars of spider monkeys (Ateles), which come from three different taxonomical groups. More details on the data are shown in the table below. 

\begin{figure*}[h]
	\includegraphics[width=\textwidth]{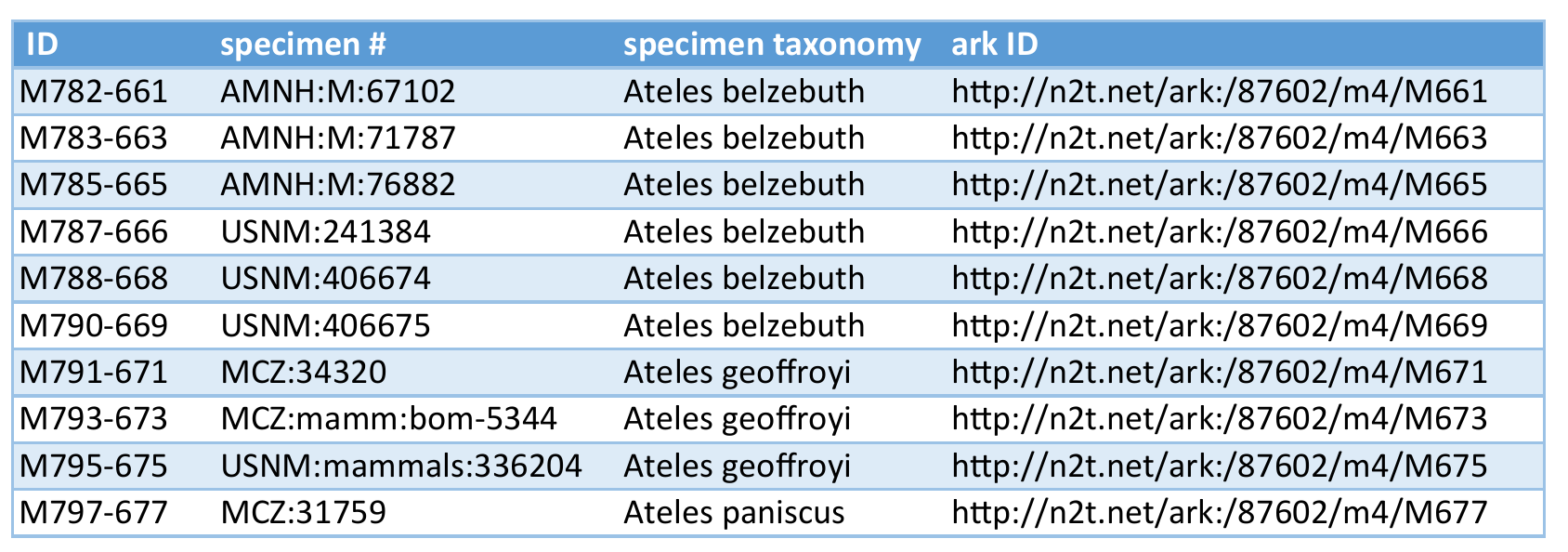}
	\vspace{-15pt}
%	\caption{} 
%	\label{fig:ateles}
\end{figure*}    
\end{document}